\numberwithin{equation}{section}
\def\bE{\mathbb{E}}
\def\bP{\mathbb{P}}
\def\cG{\mathcal{G}}
\def\cN{\mathcal{N}}
\def\cE{\mathcal{E}}
\def\eps{\varepsilon}
\DeclareMathOperator{\sech}{sech}
\def\tbf{\textbf }
\def\wh{\widehat}
\def\pf{p_{\text{free}} }
\def\bE{{\mathbb{E}}}
\newtheorem{theorem}{Theorem}[section]  
\newtheorem{prop}[theorem]{Proposition}
\newtheorem{lemma}[theorem]{Lemma}
                          \let\r=r
\begin{document}

\title{The Replica Symmetric Formula for the SK Model Revisited}

\author{Christian Brennecke$^1$\footnote{Corresponding author: brennecke@iam.uni-bonn.de}\;, 
Horng-Tzer Yau$^2$
\\
\\
Institute for Applied Mathematics, University of Bonn, \\
Endenicher Allee 60, 53115 Bonn, Germany$^1$ \\
\\
Department of Mathematics, Harvard University, \\
One Oxford Street, Cambridge MA 02138, USA$^2$ \\
\\}

\maketitle

\begin{abstract}
We provide a simple extension of Bolthausen's Morita type proof of the replica symmetric formula [E. Bolthausen, Stat. Mech. of Classical and Disordered Systems, pp. 63-93 (2018)] for the Sherrington-Kirkpatrick model and prove the replica symmetry for all $(\beta,h)$ that satisfy $\beta^2  E \sech^2(\beta\sqrt{q}Z+h) \leq 1$, where $q = E\tanh^2(\beta\sqrt{q}Z+h)$. Compared to [E. Bolthausen, Stat. Mech. of Classical and Disordered Systems, pp. 63-93 (2018)], the key of the argument is to apply the conditional second moment method to a suitably reduced partition function. 
\end{abstract}

\section{Introduction}

We study systems of $N$ spins $ \sigma_i$, $i\in \{1,\dots,N\}$, with values in $\{-1,1\}$ and with the Hamiltonian $H_N:\{-1,1\}^N\to\mathbb{R}$ defined by
		\begin{equation} \label{eq:HN}
		H_N(\sigma) = \frac{\beta}{\sqrt 2}\sum_{1\leq i, j\leq N} g_{ij} \sigma_{i}\sigma_j + h\sum_{i=1}^N  \sigma_i.
		\end{equation}
The interactions $ \{ g_{ij} \}$ are i.i.d. centered Gaussians of variance $1/N$ for $i\neq j$ and we set $g_{ii}\equiv 0$. $\beta \geq 0$ denotes the inverse temperature and $h > 0 $ the external field strength. 

Eq. \eqref{eq:HN} corresponds to the Sherrington-Kirkpatrick (SK) spin glass model \cite{SK} and we are interested in its free energy $f_N$ at high temperature, where
		\begin{equation} f_N = \frac1N \log Z_N, \,\,\text{  } \,\,Z_N =  \sum_{\sigma\in\{-1,1\}^N} e^{H_N(\sigma)}. \end{equation}
The mathematical understanding of the SK model has required substantial efforts until the famous Parisi formula \cite{Par1, Par2} was rigorously established by Guerra \cite{Gue} and Talagrand \cite{Tal}. Later, Panchenko \cite{Pan1, Pan3} gave another proof based on the ultrametricity of generic models. For a thorough introduction to the topic, we refer to \cite{MPV, Tal1, Tal2, Pan2}. 

Despite the validity of the Parisi formula, it is an interesting question to prove the replica symmetry of the SK model at high temperature, as predicted by de Almeida and Thouless \cite{AT}. Replica symmetry is expected for all $(\beta,h)$ that satisfy
		\begin{equation}\label{eq:ATline}   \beta^2 E  \sech^4(\beta\sqrt{q}Z+h) < 1, \end{equation}
where $q$ denotes the unique solution of the self-consistent equation 
		\begin{equation}\label{eq:q}q =  E \tanh^2(\beta \sqrt{q} Z +h).\end{equation}
In both cases $Z \sim\mathcal{N}(0,1)$ denotes a standard Gaussian and $E$ the expectation over $Z$. 

In the special case that the external field in direction of $\sigma_i$ is a centered Gaussian random variable, $h_i = hg_i$ for i.i.d. $g_i \sim \mathcal{N}(0,1)$ (independent of the $g_{ij}$), replica symmetry has been recently shown in \cite{Ch} for all $(\beta,h)$ that satisfy \eqref{eq:ATline} (in which case $h$ in \eqref{eq:ATline} and \eqref{eq:q} is replaced by $h Z'$ for some $ Z'\sim \cN(0,1)$ independent of $Z$); see also \cite{Sh} for previous results in this case. For Hamiltonians $H_N$ as in \eqref{eq:HN} (or, more generally, Hamiltonians with non-centered random external field), however, replica symmetry is to date only known above the AT line up to a bounded region in the $(\beta,h)$-phase diagram. This has been analyzed in \cite{JagTob}. Like \cite{Ch}, this analysis is based on the Parisi variational problem and we refer to \cite{JagTob} for the details. For previously obtained results based on the Parisi formula, see also \cite[Chapter 13]{Tal2}.

In this note, instead of analyzing the high temperature regime in view of the Parisi variational problem, we give a simple extension of Bolthausen's argument \cite{Bolt2} and prove the replica symmetric formula for all $(\beta,h)$ that satisfy
		 \begin{equation}\label{eq:techline} \beta^2  E  \sech^2(\beta\sqrt{q}Z+h)  \leq 1. \end{equation}
Although \eqref{eq:techline} is clearly stronger than the condition \eqref{eq:ATline}, it already covers a fairly large region of the high temperature regime, see Fig. \ref{fig} for a schematic. It improves upon the inverse temperature range from \cite{Bolt2}, where $\beta $ was assumed to be sufficiently small. 

\begin{theorem}\label{thm:main}
Assume that $(\beta,h)$ satisfies \eqref{eq:techline}, then	
		\begin{equation}\label{eq:RS} \lim_{N\to\infty}\mathbb{E} \frac1N\log Z_N = \log2+  E \log \cosh(\beta\sqrt{q}Z+h) + \frac{\beta^2}4(1-q)^2. \end{equation}
\end{theorem}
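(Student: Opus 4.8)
\emph{Overall strategy.} Write $\mathrm{RS}(\beta,h)$ for the right-hand side of \eqref{eq:RS}. The bound $\limsup_{N\to\infty}\bE\,\tfrac1N\log Z_N\le \mathrm{RS}(\beta,h)$ is Guerra's replica-symmetric interpolation bound and holds for all $(\beta,h)$ without any hypothesis, so the entire content of the theorem is the matching lower bound. Following the Morita philosophy of \cite{Bolt2}, the plan is to exhibit, for each fixed $k\in\bN$, a random variable $\widetilde Z_N=\widetilde Z_N^{(k)}$ with $\widetilde Z_N\le Z_N$ up to a subexponential factor, such that $\tfrac1N\log\widetilde Z_N\ge \mathrm{RS}(\beta,h)-\delta_k$ with probability tending to one, where $\delta_k\to0$ as $k\to\infty$; combined with the upper bound and a limit $k\to\infty$ this gives \eqref{eq:RS}.

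\emph{Step 1: the cavity iteration.} Run Bolthausen's Onsager-corrected recursion $m^{0},m^{1},\dots,m^{k}\in\bR^N$, $m^{\ell+1}_i=\tanh\!\big(h+\beta(\mathbf g\,m^{\ell})_i-\beta^2(1-\tfrac1N\|m^{\ell}\|^2)\,m^{\ell-1}_i\big)$, with $\mathbf g$ the appropriately normalized symmetrized interaction matrix, started from a constant vector. The state-evolution analysis of this recursion gives, for fixed $\ell$ and $N\to\infty$: the empirical law of the coordinates of $m^{\ell}$ converges to that of $\tanh(\beta\sqrt{q_\ell}Z+h)$ with $q_\ell\uparrow q$, $\tfrac1N\|m^{\ell}\|^2\to q_\ell$, and successive normalized increments become orthogonal. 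Let $\cF_k$ be the $\sigma$-algebra generated by the finitely many linear functionals of $\mathbf g$ used in the first $k$ steps; then $m^{0},\dots,m^{k}$ are $\cF_k$-measurable, while the part of the disorder orthogonal to the iteration remains, conditionally on $\cF_k$, an essentially fresh Gaussian field.

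\emph{Step 2: reduced partition function, conditional moments, and conclusion.} Define $\widetilde Z_N$ by summing $e^{H_N(\sigma)}$ over only those $\sigma$ whose overlaps with $m^{0},\dots,m^{k}$ lie in narrow windows dictated by the replica-symmetric picture (overlap with $m^{k}$ near $q$, increments orthogonal to $\sigma$), and by replacing, after conditioning on $\cF_k$, the quadratic form by its conditional mean-field value, so that the residual randomness in the exponent is a Gaussian field of small, explicit covariance; this is the ``suitably reduced'' object. Then $\bE[\widetilde Z_N\mid\cF_k]$ is an exactly computable Gaussian integral whose $\tfrac1N\log$ tends, inserting the state-evolution limits of Step 1, to $\mathrm{RS}(\beta,h)-\delta_k$ in probability with $\delta_k\to0$. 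The point of the reduction is that the conditioning removes the small-overlap contributions that spoil the unconditioned second moment, so $\bE[\widetilde Z_N^2\mid\cF_k]$ is governed by a \emph{one}-dimensional Laplace integral in the replica overlap $R_{12}=\tfrac1N\sum_i\sigma^1_i\sigma^2_i$, and $\bE[\widetilde Z_N^2\mid\cF_k]\le(1+o(1))\big(\bE[\widetilde Z_N\mid\cF_k]\big)^2$ as soon as the associated rate function $R\mapsto\psi(R)$ is \emph{strictly} maximized at $R=q$; a direct computation identifies $\psi''(q)<0$ with precisely the hypothesis $\beta^2 E\,\sech^2(\beta\sqrt q Z+h)\le1$ of \eqref{eq:techline} — this is where \eqref{eq:techline} enters, and why the argument reaches this line rather than the AT line \eqref{eq:ATline}, the factor $\sech^2$ versus $\sech^4$ being the price of the second moment method. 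Paley–Zygmund then gives $\widetilde Z_N\ge(1-o(1))\bE[\widetilde Z_N\mid\cF_k]$ with $\cF_k$-conditional probability $\to1$, hence with probability $\to1$; together with $\widetilde Z_N\le Z_N$ this yields $\liminf_{N\to\infty}\bE\,\tfrac1N\log Z_N\ge\mathrm{RS}(\beta,h)-\delta_k$, and $k\to\infty$ finishes the proof.

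\emph{Main obstacle.} The crux is to choose the reduction in Step 2 so that simultaneously (a) essentially no exponential mass of $Z_N$ is lost and (b) the conditional second moment is tame throughout the whole region \eqref{eq:techline}. Both rest on a quantitative control of the cavity recursion — that the iterates $m^{k}$ genuinely track the barycenter of the Gibbs measure and concentrate, with errors uniform in $N$ for fixed $k$ — which is the ingredient that confined \cite{Bolt2} to small $\beta$; upgrading it via the reduced partition function so as to cover all $(\beta,h)$ satisfying \eqref{eq:techline} is the heart of the matter.
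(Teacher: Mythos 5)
Your strategy is the paper's: Guerra's interpolation for the upper bound, and for the lower bound, Bolthausen's iterative TAP construction plus a conditional second moment argument applied to a partition function restricted to spins whose overlaps with the iterates lie in narrow windows. The way \eqref{eq:techline} enters is also correctly located — the restricted second moment is a one-dimensional Laplace-type sum in the (centered) replica overlap, and in the paper this is controlled by an exponential Chernoff bound on $\pf(\{|\langle\sigma,\tbf{Q}^{(k)}\tau\rangle|>t\})$, which works precisely when $\beta^2(1-q)\le1$, i.e.\ $\beta^2 E\sech^2(\beta\sqrt qZ+h)\le1$, consistent with your ``rate function nonpositive at $R=q$'' picture. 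So the skeleton matches.

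There is, however, a genuine gap in how you close the argument. You assert that Paley--Zygmund yields $\widetilde Z_N\ge(1-o(1))\bE_k\widetilde Z_N$ with conditional probability $\to1$. It does not: Lemmas \ref{lm:1stmoment}--\ref{lm:2ndmoment} only give $\frac1N\log\bE_k\widetilde Z_N^2 - \frac2N\log\bE_k\widetilde Z_N \to 0$, i.e.\ the ratio $(\bE_k\widetilde Z_N)^2/\bE_k\widetilde Z_N^2$ is merely $e^{-o(N)}$, not $1-o(1)$. Thus Paley--Zygmund only gives $\bP_k(\widetilde Z_N\ge\tfrac12\bE_k\widetilde Z_N)\ge e^{-\delta_1 N}$ for any fixed $\delta_1>0$, a subexponentially small probability. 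The paper bridges this by invoking Gaussian concentration of the free energy, $\bP(|\tfrac1N\log Z_N-\tfrac1N\bE\log Z_N|>\delta_3)\le e^{-N\delta_3^2/\beta^2}$: choosing $\delta_1<\delta_3^2/(2\beta^2)$, the event $\tfrac1N\log Z_N\ge RS-\delta_2$ has probability $\ge\tfrac14 e^{-\delta_1 N}$, which is incompatible with concentration unless $\tfrac1N\bE\log Z_N\ge RS-\delta_2-\delta_3$. This concentration step is essential and missing from your sketch. A secondary, more cosmetic point: the AMP-style recursion you write in Step~1 with an explicit Onsager term is not what the paper uses; the paper uses Bolthausen's conditionally orthogonalized construction ($\phi^{(s)}$, $\tbf{g}^{(s)}$, $\cG_k$) precisely because the modified interaction $\tbf{g}^{(k+1)}$ is then exactly conditionally Gaussian and annihilates $\mathrm{span}(\phi^{(1)},\dots,\phi^{(k)})$, which is what makes the conditional moment computations in Lemmas \ref{lm:1stmoment} and \ref{lm:2ndmoment} exact rather than approximate.
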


\noindent \emph{Remarks:} 
\begin{itemize}
\item[1)] From \cite{GueTo}, it is well-known that $\lim_{N\to\infty}\mathbb{E} \frac1N\log Z_N$ exists and that almost surely $\lim_{N\to\infty} \frac1N \log Z_N = \lim_{N\to\infty}\mathbb{E} \frac1N\log Z_N$.

\item[2)] It follows from the results of \cite{Gue} that the right hand side in \eqref{eq:RS} provides an upper bound to the free energy $\lim_{N\to\infty}\mathbb{E} \frac1N\log Z_N$, for all inverse temperatures and external fields $(\beta,h)$. To establish Theorem \ref{thm:main}, it is therefore sufficient to prove that the right hand side in \eqref{eq:RS} provides a lower bound to $\lim_{N\to\infty}\mathbb{E} \frac1N\log Z_N$. 
\end{itemize}

We conclude this introduction with a quick heuristic outline of the main argument. To this end, consider first the case $h=0$ where the critical temperature corresponds to $\beta =1$. In this case, it is straight forward (see e.g. \cite[Chapter 1, Section 3]{BoBo}) to see that 
		\[ \lim_{N\to\infty} \frac1N\log \mathbb{E} Z_N =  \log 2 +  \frac{\beta^2}4, \hspace{0.5cm} \lim_{N\to\infty} \frac1N\log \mathbb{E} Z_N^2 = 2\log 2 +  \frac{\beta^2}2\]
for all $\beta  <1 $. The replica symmetric formula thus follows from the second moment method, using the Gaussian concentration of the free energy. In fact, for $h=0$, also the fluctuations of $Z_N/\mathbb{E} Z_N$ have been known for a long time \cite{ALR}.

Clearly, it would be desirable to extend this simple argument to the case $h>0$, but a direct application of the second moment method does not work here. However, as suggested in \cite{Bolt2}, one may hope to obtain a model similar to the case $h=0$ by centering the spins around suitable magnetizations and viewing $Z_N$, up to normalization, as an average over the corresponding coin-tossing measure. To center the spins correctly, recall that at high temperature one expects the TAP equations \cite{TAP} to hold, that is
		\begin{equation}\label{eq:TAPeq} m_i \approx \tanh\Big( h + \beta\sum_{j\neq i} \bar{g}_{ij} m_j - \beta^2(1-q)m_i \Big)\hspace{0.5cm}(\text{for }i=1,\dots,N),\end{equation}
where $\bar{g}_{ij} = (g_{ij}+g_{ji})/\sqrt2$ and $m_i =  Z_N^{-1}\sum_{\sigma} \sigma_i \,e^{H_N(\sigma)}$. The validity of \eqref{eq:TAPeq} is known for sufficiently small $\beta$ (see \cite{Tal1, Cha} and, more recently, \cite{ABSY}; see also \cite{AuJag} on the TAP equations for generic models, valid at all temperatures) and expected to be true under \eqref{eq:ATline}. In \cite{Bolt1,Bolt2}, Bolthausen has provided an iterative construction $(\textbf{m}^{(s)})_{s \in\mathbb{N}}$ of the solution to \eqref{eq:TAPeq} that converges (in a suitable sense) in the full high temperature regime \eqref{eq:ATline}. The main result of \cite{Bolt2} is a novel proof of \eqref{eq:RS} for $\beta$ small enough, based on a conditional second moment argument, given the approximate solutions $(\textbf{m}^{(s)})_{s\in\mathbb{N}}$. It has remained an open question, however, if the approach can be extended to the region \eqref{eq:ATline}. 

In this note, while we are not able to resolve this question for all $(\beta,h)$ satisfying \eqref{eq:ATline}, we improve the range of $(\beta, h)$ to \eqref{eq:techline} as follows. \cite{Bolt1,Bolt2} show, roughly speaking, that $\textbf{m}^{(k+1)} \approx \sum_{s=1}^k \gamma_s \phi^{(s)}$ for certain orthonormal vectors $\phi^{(s)} \in \mathbb{R}^N$ and deterministic numbers $\gamma_s \, (\,\approx  \langle \textbf{m}^{(k+1)}, \phi^{(s)} \rangle$ with high probability), where $\langle \tbf{x}, \tbf{y} \rangle = N^{-1}\sum_{i=1}^N x_iy_i$ for $\tbf{x}, \tbf{y}\in \mathbb{R}^N$. One also has $\textbf{g} = \textbf{g}^{(k+1)} + \sum_{s=1}^k\rho^{(s)}$ for the interaction $\textbf{g} = (g_{ij})_{1\leq i,j\leq N}$, where the $\rho^{(s)} \in \mathbb{R}^{N\times N}$ are measurable w.r.t. $(\textbf{m}^{(s)})_{s\leq k+1}$ and where the modified interaction $\textbf{g}^{(k+1)}$ is Gaussian, conditionally on the $(\textbf{m}^{(s)})_{s\leq k+1}$, with the property that $\textbf{g}^{(k+1)} \sum_{s=1}^k \gamma_s \phi^{(s)} = 0$. Up to negligible errors, one obtains with $\wh \sigma = \sigma - \tbf{m}^{(k+1)}$ that
		\begin{equation} \label{eq:mainidea}\begin{split} \frac1N \log Z_N \approx &\,\log 2 +  E \log \cosh(\beta \sqrt{q}Z+h) \\
		& + \frac1N \log \sum_{\sigma\in\{-1,1\}^{N}} \pf(\sigma) \exp\bigg[ \frac{N\beta}{\sqrt{2}}\big \langle\wh \sigma, \textbf{g}^{(k)} \wh \sigma \big\rangle + N \mathcal{O}\big(\max_{s} |  \gamma_s - \langle\sigma, \phi^{(s)}\rangle|\big) \bigg], \end{split}\end{equation}
where $\pf$ denotes the product measure for which $\sum_{\sigma } \pf(\sigma) \,\sigma=\textbf{m}^{(k+1)}$. A simple observation is now that we can ignore the error $N\mathcal{O}(\max_{s} |  \gamma_s - \langle\sigma, \phi^{(s)}\rangle|)$ in \eqref{eq:mainidea} by restricting the modified partition function to those $\sigma$ with $ \max_{s} |  \gamma_s - \langle\sigma, \phi^{(s)}\rangle| \approx 0$. Note that the probability of the complement of this set is small under $\pf$, because $\gamma_s\approx  \langle \textbf{m}^{(k+1)}, \phi^{(s)} \rangle$. This yields a simple lower bound on $\frac1N\log Z_N$ and we can apply the conditional second moment argument to the restricted partition function. We show that its first conditional moment equals $ \beta^2 (1-q)^2/4$ (up to negligible errors) in the full high temperature regime \eqref{eq:ATline}. To dominate its second moment by the square of the first, on the other hand, we need to impose the stronger condition \eqref{eq:techline}. 

Notice that imposing similar orthogonality restrictions on the partition function has been proved useful before for obtaining lower bounds on the free energy, like in the TAP analysis of the spherical SK model \cite{BelKis}. 

\begin{figure}\label{fig}
\centering
\includegraphics[width=0.52\textwidth]{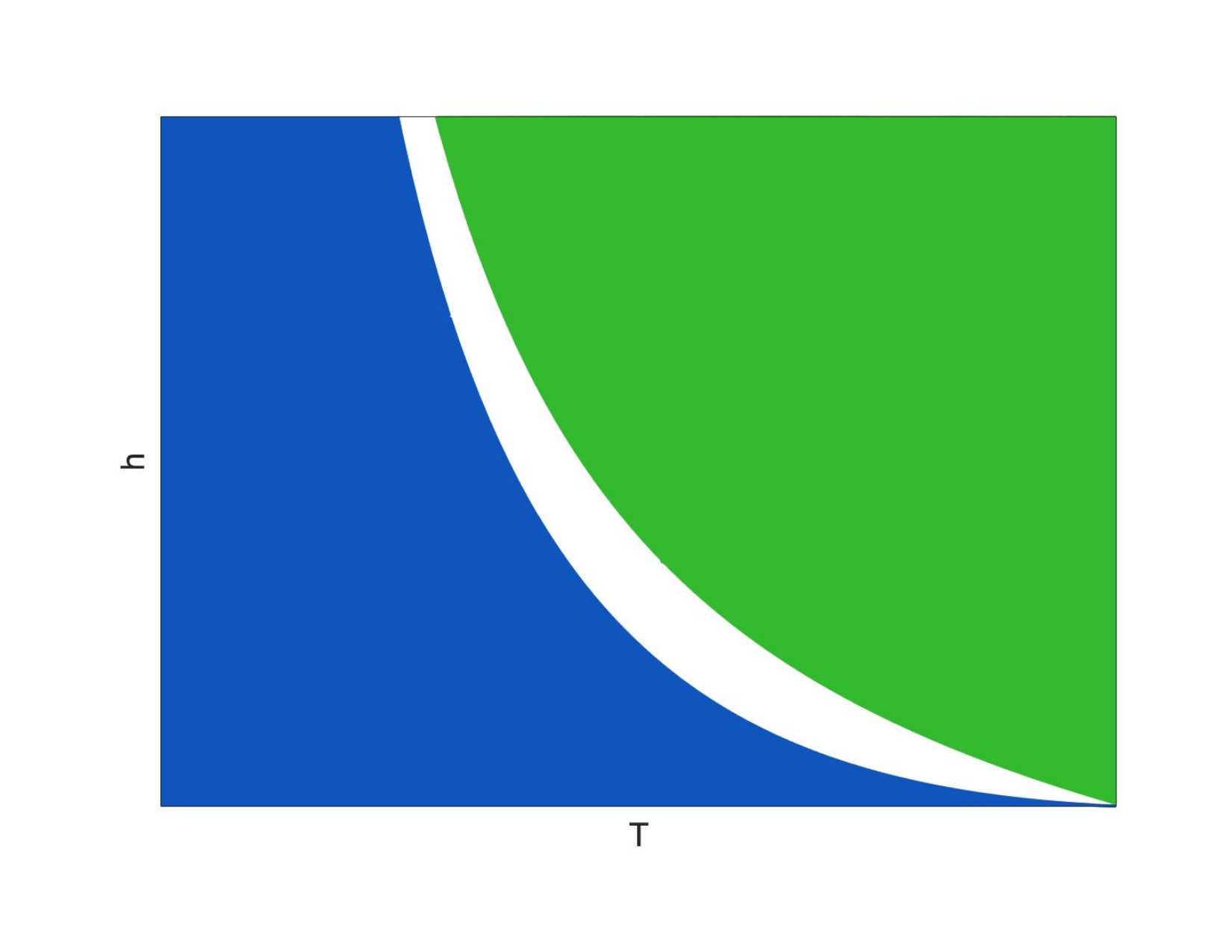}
\caption{Schematic of the $(T,h)$ phase diagram, where $T=\frac1\beta$ denotes the temperature. In the blue region, whose boundary corresponds to the AT line \eqref{eq:ATline}, the SK model is known to be replica symmetry breaking.  The boundary of the green region corresponds to the condition \eqref{eq:techline}. Theorem \ref{thm:main} proves the replica symmetry in the green region.}
\end{figure}

Although \eqref{eq:techline} covers already a comparably large region of the high temperature phase as schematically shown in Fig. \ref{fig}, it remains an open question whether the second moment argument can be extended to the full high temperature regime \eqref{eq:ATline}, see also related comments in \cite[Section 6]{Bolt2}.


The paper is structured as follows. In the next two sections we set up the notation and recall Bolthausen's iterative construction of the magnetization \cite{Bolt1, Bolt2}. In Section \ref{sec:2ndmoment}, we define the reduced partition function and compute its first and second moments. In the last section, we apply the conditional second moment method to prove Theorem \ref{thm:main}.

\section{Notation}\label{sec:notation}

In this section, we introduce basic notation and conventions. We follow closely \cite{Bolt2}. 

We usually denote vectors in $\mathbb{R}^N$ by boldface or greek letters. If $\tbf{x}\in\mathbb{R}^N$, and $g:\mathbb{R}\to\mathbb{R}$, we define $g(\tbf{x})$ in the componentwise sense. By $\langle\cdot, \cdot\rangle: \mathbb{R}^N\times\mathbb{R}^N\to\mathbb{R}$, we denote the normalized inner product
		\[ \langle \tbf{x}, \tbf{y}\rangle = \frac1N \sum_{i=1}^N x_i y_i\]
and by $\| \cdot\| =\sqrt{\langle\cdot, \cdot\rangle}$ the induced norm. 
We also normalize the tensor product $\tbf{x}\otimes \tbf{y}: \mathbb{R}^N\to \mathbb{R}^N$ of two vectors $\tbf{x}, \tbf{y}\in\mathbb{R}^N$ so that for all $\tbf{z}\in\mathbb{R}^N$
		\[ (\tbf{x}\otimes \tbf{y}) (\tbf{z}) =  \langle \tbf{y},\tbf{z}\rangle\, \tbf{x}.\]

Given a matrix $\tbf A\in \mathbb{R}^{N\times N}$, ${\tbf{A}}^T\in \mathbb{R}^{N\times N}$ denotes its transpose and $ \bar{\tbf{A}} \in\mathbb{R}^{N\times N}$ denotes its symmetrization
		\[  \bar{\tbf{A}}  = \frac{\tbf{A}+\tbf{A}^T}{\sqrt{2}}.\]

We mostly use the letters $Z, Z', Z_1, Z_2,$ etc. to denote standard Gaussian random variables independent of the disorder $\{g_{ij}\}$ and independent from one another. When we average over such Gaussians, we denote the corresponding expectation by $E$, to distinguish it from the expectation $\mathbb{E}$ with respect to the disorder $\{g_{ij}\}$. Unless specified otherwise, we consider all Gaussians to be centered.

Finally, given two sequences of random variables $ (X_N)_{N\geq 1}, (Y_N)_{N\geq 1}$ that may depend on parameters like $\beta, h,$ etc. we say that 
		\[ X_N\simeq Y_N\]
if and only if there exist positive constants $c, C >0$, which may depend on the parameters, but which are independent of $N $, such that for every $t >0$ we have
		\[ \mathbb{P} ( |X_N-Y_N| > t ) \leq C e^{-c t^2N}.\] 

\section{Bolthausen's Construction of the Local Magnetizations}\label{sec:mi}

In this section, we recall Bolthausen's iterative construction of the solution to the TAP equations \cite{Bolt1,Bolt2} and list the properties that we will need for the proof of Theorem \ref{thm:main}. We follow here the conventions of \cite{Bolt2} and we refer to \cite[Sections 2, 4 \& 5]{Bolt2} for the proofs of the following statements. 

First of all, we define three sequences $(\alpha_{k})_{k\in\mathbb{N}}$, $(\gamma_k)_{k\in\mathbb{N}}$ and $(\Gamma_k)_{k\in\mathbb{N}}$. Set
		\[ \alpha_1 = \sqrt{q}\gamma_1,\hspace{0.5cm} \gamma_1 = E \tanh(\beta\sqrt q Z +h ), \hspace{0.5cm} \Gamma_1^2 = \gamma_1^2,\]
where here and in the following $q$ denotes the unique solution of \eqref{eq:q}. Then, we define $\psi: [0,q]\to [0,q]$ by
		\[ \psi(t)=E \tanh(\beta \sqrt t Z +\beta \sqrt{q-t}Z' +h ) \tanh (\beta \sqrt t Z +\beta \sqrt{q-t}Z'' +h ) \]
and set recursively
		\[\alpha_{k} =  \psi(\alpha_{k-1}), \hspace{0.5cm} \gamma_k = \frac{\alpha_{k} - \Gamma^2_{k-1}}{\sqrt{q - \Gamma^2_{k-1}}}, \hspace{0.5cm} \Gamma_k^2 = \sum_{j=1}^k \gamma_k^2. \]
The following lemma collects important properties of $(\alpha_{k})_{k\in\mathbb{N}}$ , $(\gamma_k)_{k\in\mathbb{N}}$ and $(\Gamma_k)_{k\in\mathbb{N}}$.
\begin{lemma}{(\cite[Lemma 2.2, Corollary 2.3, Lemma 2.4]{Bolt1}, \cite[Lemma 2]{Bolt2})}\label{lm:seqlemma}
\begin{enumerate}[1)]
\item $\psi$ is strictly increasing and convex in $[0,q]$ with $0< \psi(0) <\psi(q) = q$. If \eqref{eq:ATline} is satisfied, then $q$ is the unique fixed point of $\psi$ in $[0,q]$.
\item The sequence $(\alpha_{k})_{k\in\mathbb{N}}$ is increasing and $\alpha_{k} >0$ for every $k\in\mathbb{N}$. If \eqref{eq:ATline} is satisfied, then $\lim_{k\to\infty} \alpha_{k} =q$ and if \eqref{eq:ATline} is satisfied with a strict inequality, the convergence is exponentially fast. 
\item For all $k \geq 2$, we have that $ 0< \Gamma_{k-1}^2<\alpha_{k} <q $ and that $0<\gamma_k < \sqrt{q -\Gamma_{k-1}^2}$. If \eqref{eq:ATline} is true, then $\lim_{k\to\infty} \Gamma_k^2 =q$ and, as a consequence, $\lim_{k\to\infty} \gamma_k =0$. 
\end{enumerate}
\end{lemma}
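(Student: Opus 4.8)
The plan is to deduce all three parts from a single structural fact: on $[0,q]$ the function $\psi$ is the restriction of a power series with non-negative coefficients. To see this, let $\phi(x):=\tanh(\beta\sqrt q\,x+h)$ and let $(c_n)_{n\ge0}$ be its Hermite coefficients, so that $\phi=\sum_{n\ge0}c_n\mathrm{He}_n$ in $L^2$ of the standard Gaussian, where $(\mathrm{He}_n)_{n\ge0}$ are the orthonormal Hermite polynomials. Setting $U_1:=\sqrt{t/q}\,Z+\sqrt{1-t/q}\,Z'$ and $U_2:=\sqrt{t/q}\,Z+\sqrt{1-t/q}\,Z''$, one checks that $(U_1,U_2)$ is a standard Gaussian pair of correlation $t/q$ and that $\beta\sqrt q\,U_i+h$ is exactly the $i$-th argument in the definition of $\psi$; hence, by Mehler's identity $E[\mathrm{He}_n(U_1)\mathrm{He}_m(U_2)]=\delta_{nm}(t/q)^n$,
\[ \psi(t)\;=\;E\big[\phi(U_1)\phi(U_2)\big]\;=\;\sum_{n\ge0}c_n^2\,(t/q)^n,\qquad t\in[0,q]. \]
Since $\sum_n c_n^2=E[\tanh^2(\beta\sqrt q Z+h)]=q<\infty$, the series converges uniformly on $[0,q]$, and as all coefficients are non-negative $\psi$ is increasing and convex there; it is strictly increasing because $c_1=E[Z\phi(Z)]=\beta\sqrt q\,E\,\sech^2(\beta\sqrt q Z+h)>0$ by Gaussian integration by parts, so $\psi'(t)\ge c_1^2/q>0$. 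The boundary values are immediate from the expansion: $\psi(0)=c_0^2=\gamma_1^2$, which is positive since $h>0$ forces $\gamma_1=E\tanh(\beta\sqrt q Z+h)>0$, and $\psi(q)=\sum_n c_n^2=q$ by \eqref{eq:q}.

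For the fixed-point claim, the point is that \eqref{eq:ATline} is exactly the condition $\psi'(q)<1$. Indeed, regarding $\psi(t)$ as a function of the covariance $\beta^2 t$ of the Gaussian pair $(X_1,X_2)$ defining it (which has common variance $\beta^2 q$) and using the standard Gaussian-interpolation identity (Price's theorem), $\psi'(t)=\beta^2\,E[\sech^2(X_1)\sech^2(X_2)]$; at $t=q$ one has $X_1=X_2$ almost surely, so $\psi'(q)=\beta^2\,E\,\sech^4(\beta\sqrt q Z+h)$. Assuming \eqref{eq:ATline}, convexity of $\psi$ gives, for every $t\in[0,q)$, $\psi(t)\ge\psi(q)+\psi'(q)(t-q)=q+\psi'(q)(t-q)>q+(t-q)=t$, using $t-q<0$ and $0\le\psi'(q)<1$. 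Thus $\psi$ lies strictly above the diagonal on $[0,q)$, and $q$ is its only fixed point in $[0,q]$.

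Parts 2) and 3) then follow by monotone iteration and an elementary induction; throughout I assume \eqref{eq:ATline}, which is the only regime needed for Theorem \ref{thm:main} (the unconditional versions of the claims in 2) and 3) are those of \cite[Lemma 2.2, Corollary 2.3, Lemma 2.4]{Bolt1}). First, $\alpha_1=\sqrt q\,\gamma_1\in(0,q)$: positivity is clear, and $\alpha_1<q$ because $\gamma_1^2<E\tanh^2(\beta\sqrt q Z+h)=q$ by Jensen. Since $\psi$ is strictly increasing with $\psi(t)>t$ on $[0,q)$ and $\psi(q)=q$, one gets $\alpha_k<\alpha_{k+1}=\psi(\alpha_k)<\psi(q)=q$ for all $k$, so $(\alpha_k)$ increases in $(0,q)$ and, being bounded and monotone, converges to a fixed point of $\psi$, hence to $q$. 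For $\Gamma_k,\gamma_k$ one runs an induction with hypothesis $0<\Gamma_{k-1}^2<\alpha_k<q$; the base case $k=2$ holds because $\alpha_2=\psi(\alpha_1)>\psi(0)=\gamma_1^2=\Gamma_1^2$ and $\alpha_2<q$. Granting the hypothesis, $\gamma_k=(\alpha_k-\Gamma_{k-1}^2)/\sqrt{q-\Gamma_{k-1}^2}$ is positive and, since $\alpha_k<q$, strictly smaller than $\sqrt{q-\Gamma_{k-1}^2}$, so $\gamma_k^2<q-\Gamma_{k-1}^2$ and hence $0<\Gamma_{k-1}^2<\Gamma_k^2<q$; moreover the identity $\Gamma_k^2-\alpha_k=(\alpha_k-\Gamma_{k-1}^2)(\alpha_k-q)/(q-\Gamma_{k-1}^2)$ gives $\Gamma_k^2<\alpha_k$, so that (using that $\alpha$ increases and $\psi(\alpha_k)<q$) $0<\Gamma_k^2<\alpha_k\le\alpha_{k+1}<q$, which closes the induction. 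Finally $\Gamma_k^2=\sum_{j\le k}\gamma_j^2$ is increasing and bounded by $q$, hence convergent; were its limit strictly below $q$, then $\gamma_k\to\sqrt{q-\lim_{k\to\infty}\Gamma_k^2}>0$ (using $\alpha_k\to q$), contradicting the summability of $(\gamma_k^2)$; therefore $\Gamma_k^2\to q$ and $\gamma_k\to0$.

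The exponential rate under strict \eqref{eq:ATline} is then easy: $\psi'(q)<1$, so by continuity $\psi'\le\rho$ on a left-neighbourhood of $q$ for some $\rho<1$, and once $\alpha_k$ enters this neighbourhood (which it does, since $\alpha_k\uparrow q$) the mean value theorem yields $q-\alpha_{k+1}=\psi(q)-\psi(\alpha_k)\le\rho\,(q-\alpha_k)$. The real work is the analytic input of part 1) — establishing the Mehler expansion with non-negative coefficients and, equivalently, recognising $\beta^2\,E\,\sech^4(\beta\sqrt q Z+h)$ as the slope $\psi'(q)$, which is what turns the AT condition into the linear-stability threshold of the fixed point $q$; once this is in hand, 2) and 3) are bookkeeping with the recursion. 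All of this is carried out in \cite[Sections 2, 4 and 5]{Bolt2} and in \cite{Bolt1}, whose conventions we follow.
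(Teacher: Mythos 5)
The paper offers no proof of this lemma at all: it is recalled verbatim from Bolthausen's work, with an explicit pointer to \cite[Sections 2, 4 \& 5]{Bolt2} and \cite{Bolt1}. So there is nothing in this paper to compare against line by line; what you have produced is an independent reconstruction.

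Your reconstruction is correct as far as it goes. The central mechanism is sound: writing $\psi(t)=E[\phi(U_1)\phi(U_2)]$ with $(U_1,U_2)$ standard Gaussian of correlation $t/q$, the Hermite/Mehler expansion $\psi(t)=\sum_{n\ge0}c_n^2(t/q)^n$ indeed gives monotonicity and convexity in one stroke, identifies $\psi(0)=\gamma_1^2>0$ (using $h>0$) and $\psi(q)=q$, and strict monotonicity follows from $c_1=E[Z\phi(Z)]=\beta\sqrt q\,E\sech^2(\beta\sqrt q Z+h)>0$. The identification $\psi'(q)=\beta^2 E\sech^4(\beta\sqrt q Z+h)$ via Price's theorem is the key analytic input, and the convexity argument $\psi(t)\ge q+\psi'(q)(t-q)>t$ for $t\in[0,q)$ correctly derives uniqueness of the fixed point from \eqref{eq:ATline}. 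The algebraic identity $\Gamma_k^2-\alpha_k=(\alpha_k-\Gamma_{k-1}^2)(\alpha_k-q)/(q-\Gamma_{k-1}^2)$ is correct and closes the induction for part 3) cleanly, and the summability contradiction for $\Gamma_k^2\to q$, $\gamma_k\to 0$ is standard and right. Bolthausen's own proof in \cite{Bolt1} runs the derivative calculation directly by Gaussian interpolation rather than through the Hermite expansion, but the slope formula $\psi'(q)=\beta^2 E\sech^4$ is the same hinge in both; your Mehler route is arguably tidier since it gives monotonicity, convexity, the boundary values and positivity of all Taylor coefficients from a single formula.

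One caveat you have already flagged but that is worth underlining: the statements ``$(\alpha_k)$ is increasing and $\alpha_k>0$'' and ``$0<\Gamma_{k-1}^2<\alpha_k<q$'' are asserted in the lemma \emph{unconditionally}, not only under \eqref{eq:ATline}. Your argument establishes $\psi(t)>t$ on $[0,q)$, and hence monotonicity of $\alpha_k$, only when $\psi'(q)<1$; without \eqref{eq:ATline} convexity permits a second fixed point $q^*\in(0,q)$, and showing $\alpha_1\le q^*$ (which is what really drives the unconditional monotonicity) needs a separate argument — it does not follow from your expansion alone, and in fact the naive bounds one might try (e.g. $\psi(\alpha_1)\ge\psi(0)+\psi'(0)\alpha_1$) do not close. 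You correctly defer this to \cite{Bolt1}; since the present paper only invokes the lemma under \eqref{eq:ATline}, this is an acceptable scoping choice, but a fully self-contained proof of the lemma as literally stated would need to supply that missing step.
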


Next, we recall Bolthausen's modified interaction matrix. We define 
		\[ \tbf{g}^{(1)} = \tbf g, \hspace{0.5cm} \phi^{(1)} = \tbf{1} \in \mathbb{R}^N, \hspace{0.5cm} \tbf{m}^{(1)}=\sqrt{q}\tbf{1}\in\mathbb{R}^N. \]
Assuming $\tbf{g}^{(s)}, \phi^{(s)}, \tbf{m}^{(s)}$ are defined for $1\leq s\leq k$, we set 
		\[ \zeta^{(s)} = \bar{\tbf g}^{(s)}\phi^{(s)}\]
and we define the $\sigma$-algebra $\cG_k$ through
		\[ \cG_k = \sigma\big( \tbf{g}^{(s)} \phi^{(s)},(\tbf{g}^{(s)})^T \phi^{(s)}: 1\leq s\leq k   \big). \]
Expectations with respect to $\cG_k$ are denoted by $\bE_k$. Furthermore, we set
		\begin{equation}\begin{split}\label{eq:mk=1}
		\tbf{h}^{(k+1)} &= h\tbf{1} + \beta \sum_{s=1}^{k-1}\gamma_s\zeta^{(s)} + \beta \sqrt{q-\Gamma_{k-1}^2}\zeta^{(k)}, \hspace{0.5cm} \tbf{m}^{(k+1)} =\tanh (\tbf{h}^{(k+1)}),\\
		\phi^{(k+1)} & = \frac{\tbf{m}^{(k+1)} - \sum_{s=1}^{k}\langle \tbf{m}^{(k+1)}, \phi^{(s)}\rangle \phi^{(s)}}{\Big\| \tbf{m}^{(k+1)} - \sum_{s=1}^{k}\langle \tbf{m}^{(k+1)}, \phi^{(s)}\rangle \phi^{(s)} \Big\|}
		\end{split}\end{equation}
and we note that $\phi^{(k+1)}$ is $\mathbb{P}-a.s.$ well-defined for all $k$ if $k<N$ \cite[Lemma 5]{Bolt2}. Finally, the modified interaction matrix $\tbf{g}^{(k+1)}$ is defined by
		\[\begin{split}
		\tbf{g}^{(k+1)} =&\, \tbf{g}^{(k)}- \rho^{(k)}, \hspace{0.5cm}\text{where}  \\
		 \rho^{(k)} =&\,  \tbf{g}^{(k)}\phi^{(k)}\otimes \phi^{(k)} + \phi^{(k)}\otimes (\tbf{g}^{(k)})^T\phi^{(k)} -\langle \tbf{g}^{(k)}\phi^{(s)} , \phi^{(k)}\rangle\, \phi^{(k)}\otimes \phi^{(k)}.
		\end{split}\]
In particular, this means that $\bar{\tbf{g}}^{(k+1)}$ is equal to
		\[\begin{split}
		\bar{\tbf{g}}^{(k+1)} = \bar{\tbf{g}}^{(k)}-\bar{\rho}^{(k)},   \hspace{0.5cm}\bar{\rho}^{(k)} = \zeta^{(k)}\otimes \phi^{(k)} + \phi^{(k)}\otimes \zeta^{(k)} -\langle \zeta^{(k)}, \phi^{(k)}\rangle \,\phi^{(k)}\otimes \phi^{(k)}.
		\end{split}\]
It is clear that $(\phi^{(s)})_{s=1}^k$ forms an orthonormal sequence of vectors in $\mathbb{R}^N$ and we denote by $\tbf{P}^{(k)}$ and $\tbf{Q}^{(k)}$ the corresponding orthogonal projections in $\mathbb{R}^N$, that is
		\[ \tbf{P}^{(k)} = \sum_{s=1}^{k} \phi^{(s)}\otimes \phi^{(s)} = (P^{(k)}_{ij})_{1\leq i,j\leq N}, \hspace{0.5cm} \tbf{Q}^{(k)}= \tbf{1}-\tbf{P}^{(k)}= (Q^{(k)}_{ij})_{1\leq i,j\leq N}.  \]
By \cite[Lemma 3]{Bolt2}, $\tbf{m}^{(k)}$ and $\phi^{(k)}$ are $\cG_{k-1}$-measurable for all $k\in \mathbb{N}$ and we also have that
		\[ \tbf{g}^{(k)}\phi^{(s)} =  (\tbf{g}^{(k)})^{T}\phi^{(s)}=\bar{\tbf{g}}^{(k)}\phi^{(s)}  =0, \,\,\forall \,\,s<k.  \]
\begin{prop}{(\cite[Prop. 4]{Bolt2})} \label{prop:intmatrix}
\begin{enumerate}[1)]
\item Conditionally on $\cG_{k-2}$, $ \tbf{g}^{(k)}$ and $ \tbf{g}^{(k-1)}$  are Gaussian with conditional covariance, given $\cG_{k-2}$, equal to
		\[\bE_{k-2} \, g_{ij}^{(k)}g_{st}^{(k)} = \frac1N  Q^{(k-1)}_{is} Q^{(k-1)}_{jt}. \]
\item Conditionally on $\cG_{k-2}$, $\tbf{g}^{(k)}$ is independent of $\cG_{k-1}$. In particular, conditionally on $\cG_{k-1}$, $\tbf{g}^{(k)}$ is Gaussian with the same covariance as in 1).
\item Conditionally on $\cG_{k-1}$, the random variables $\zeta^{(k)}$ are Gaussian with 
		\[ \bE_{k-1} \zeta^{(k)}_i\zeta^{(k)}_j = Q^{(k-1)}_{ij} + \frac1N \phi^{(k)}_i\phi^{(k)}_j.\]
\end{enumerate}
\end{prop}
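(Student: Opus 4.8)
The plan is to prove 1), 2) and 3) together by induction on $k$, carrying 1) and 2) as the induction hypothesis; item 3) at stage $k$ then drops out of 1) and 2) at stage $k$.

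Two preliminary observations do the work. First, an algebraic identity: expanding the definition of $\rho^{(k)}$ with the tensor calculus --- for any $\tbf{A}\in\mathbb R^{N\times N}$ and unit vector $\phi$ one has $(\phi\otimes\phi)\tbf{A}=\phi\otimes(\tbf{A}^T\phi)$, $\tbf{A}(\phi\otimes\phi)=(\tbf{A}\phi)\otimes\phi$ and $(\phi\otimes\phi)\tbf{A}(\phi\otimes\phi)=\langle\tbf{A}\phi,\phi\rangle\,\phi\otimes\phi$ --- one reads off
\[ \tbf{g}^{(k+1)}=\tbf{q}^{(k)}\,\tbf{g}^{(k)}\,\tbf{q}^{(k)},\qquad \tbf{q}^{(k)}:=\tbf{1}-\phi^{(k)}\otimes\phi^{(k)}. \]
Combined with the annihilation property $\tbf{g}^{(k)}\phi^{(s)}=(\tbf{g}^{(k)})^T\phi^{(s)}=0$, $s<k$, from \cite[Lemma 3]{Bolt2} and the commuting relation $\tbf{q}^{(k)}\tbf{Q}^{(k-1)}=\tbf{Q}^{(k-1)}\tbf{q}^{(k)}=\tbf{Q}^{(k)}$, this is equivalently $\tbf{g}^{(k+1)}=\tbf{Q}^{(k)}\tbf{g}^{(k)}\tbf{Q}^{(k)}$: each step of the construction compresses the modified interaction by one further rank-one projection. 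Second, a Gaussian bookkeeping fact: for any $\langle\cdot,\cdot\rangle$-orthonormal basis $u_1,\dots,u_N$ of $\mathbb R^N$ one has $\tbf{A}=\sum_{a,b}\langle u_a,\tbf{A}u_b\rangle\,u_a\otimes u_b$ for every $\tbf{A}$ and $\sum_{a=1}^N u_a\otimes u_a=\tbf{1}$; consequently, if $\tbf{A}$ is centered Gaussian with covariance $\tfrac1N Q^{(k-1)}_{is}Q^{(k-1)}_{jt}$ and one picks $u_a=\phi^{(a)}$ for $a\le k-1$, the coefficients $Y_{ab}:=\langle u_a,\tbf{A}u_b\rangle$ are jointly Gaussian with $\mathrm{Cov}(Y_{ab},Y_{cd})=\tfrac1N\langle u_a,\tbf{Q}^{(k-1)}u_c\rangle\langle u_b,\tbf{Q}^{(k-1)}u_d\rangle$, whence $Y_{ab}=0$ a.s.\ unless $a,b\ge k$, the $\{Y_{ab}\}_{a,b\ge k}$ are i.i.d.\ $\mathcal N(0,1/N)$, and $\tbf{A}=\sum_{a,b\ge k}Y_{ab}\,u_a\otimes u_b$. (The deterministic constraint $g_{ii}=0$ only perturbs the relevant covariances by $O(1/N)$ and is harmless, as in \cite{Bolt2}.)

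For the induction step, assume 1) and 2) at stage $k$. Since $\tbf{Q}^{(k-1)}$ is $\cG_{k-2}$-measurable and $\phi^{(k)}$ is $\cG_{k-1}$-measurable by \cite[Lemma 3]{Bolt2}, and since 2) together with the standard fact ``$X\perp\cG_{k-1}\mid\cG_{k-2}$ (with $\cG_{k-2}\subseteq\cG_{k-1}$) implies $\mathcal L(X\mid\cG_{k-1})=\mathcal L(X\mid\cG_{k-2})$'' shows that, conditionally on $\cG_{k-1}$, $\tbf{g}^{(k)}$ is centered Gaussian with covariance $\tfrac1N Q^{(k-1)}_{is}Q^{(k-1)}_{jt}$, the second observation gives, conditionally on $\cG_{k-1}$ and in a $\cG_{k-1}$-measurable orthonormal basis chosen with $u_1=\phi^{(1)},\dots,u_k=\phi^{(k)}$, the representation $\tbf{g}^{(k)}=\sum_{a,b\ge k}Y_{ab}\,u_a\otimes u_b$ with the $\{Y_{ab}\}_{a,b\ge k}$ conditionally i.i.d.\ $\mathcal N(0,1/N)$. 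Hence $\tbf{g}^{(k)}\phi^{(k)}=\sum_{a\ge k}Y_{ak}u_a$ and $(\tbf{g}^{(k)})^T\phi^{(k)}=\sum_{b\ge k}Y_{kb}u_b$, while by the first observation $\tbf{g}^{(k+1)}=\tbf{q}^{(k)}\tbf{g}^{(k)}\tbf{q}^{(k)}=\sum_{a,b\ge k+1}Y_{ab}\,u_a\otimes u_b$ because $\tbf{q}^{(k)}u_k=0$. As $\{Y_{ab}\}_{a,b\ge k+1}$ is disjoint from $\{Y_{ak}\}_{a\ge k}\cup\{Y_{kb}\}_{b\ge k}$ inside an i.i.d.\ Gaussian family, $\tbf{g}^{(k+1)}$ is, conditionally on $\cG_{k-1}$, independent of $(\tbf{g}^{(k)}\phi^{(k)},(\tbf{g}^{(k)})^T\phi^{(k)})$ and therefore of $\cG_k=\cG_{k-1}\vee\sigma(\tbf{g}^{(k)}\phi^{(k)},(\tbf{g}^{(k)})^T\phi^{(k)})$; this is 2) at stage $k+1$ (and, by the same standard fact again, conditionally on $\cG_k$ the matrix $\tbf{g}^{(k+1)}$ has this same Gaussian law). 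Moreover $(\tbf{g}^{(k+1)},\tbf{g}^{(k)})$ is conditionally jointly Gaussian, both being linear in the $Y_{ab}$, and from $g^{(k+1)}_{ij}=\tfrac1N\sum_{a,b\ge k+1}Y_{ab}(u_a)_i(u_b)_j$ together with $\sum_{a\ge k+1}(u_a)_i(u_a)_s=N\delta_{is}-\sum_{a=1}^{k}\phi^{(a)}_i\phi^{(a)}_s=N Q^{(k)}_{is}$ one gets $\bE_{k-1}\,g^{(k+1)}_{ij}g^{(k+1)}_{st}=\tfrac1N Q^{(k)}_{is}Q^{(k)}_{jt}$, which is 1) at stage $k+1$. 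The base case is this same computation run at ``$k=1\to k=2$'', where $\tbf{g}^{(1)}=\tbf{g}$, $\phi^{(1)}=\tbf{1}$, $\tbf{Q}^{(0)}=\tbf{1}$ and $\cG_0$ is trivial.

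Finally, 3) at stage $k$ follows from 1) and 2) at stage $k$: conditionally on $\cG_{k-1}$ the matrix $\tbf{g}^{(k)}$ is centered Gaussian with covariance $\tfrac1N Q^{(k-1)}_{is}Q^{(k-1)}_{jt}$ and $\phi^{(k)}$ is a fixed unit vector with $\tbf{Q}^{(k-1)}\phi^{(k)}=\phi^{(k)}$, so $\zeta^{(k)}=\tfrac1{\sqrt2}(\tbf{g}^{(k)}+(\tbf{g}^{(k)})^T)\phi^{(k)}$ is conditionally Gaussian, and expanding $\bE_{k-1}\,\zeta^{(k)}_i\zeta^{(k)}_j$ while using $\sum_l Q^{(k-1)}_{il}\phi^{(k)}_l=\phi^{(k)}_i$ and $\sum_{l,m}Q^{(k-1)}_{lm}\phi^{(k)}_l\phi^{(k)}_m=N\|\phi^{(k)}\|^2=N$ gives $\bE_{k-1}\,\zeta^{(k)}_i\zeta^{(k)}_j=Q^{(k-1)}_{ij}+\tfrac1N\phi^{(k)}_i\phi^{(k)}_j$. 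The step I expect to demand the most care is the filtration bookkeeping in the induction: one has to verify that the only new randomness exposed in passing from $\cG_{k-1}$ to $\cG_k$ sits in the sub-family $\{Y_{ak}\}_{a\ge k}\cup\{Y_{kb}\}_{b\ge k}$ --- which is exactly where the annihilation property of \cite[Lemma 3]{Bolt2} and the $\cG_{k-1}$-measurability of $\phi^{(k)}$ enter --- and that the conditional-independence upgrade from $\cG_{k-2}$ to $\cG_{k-1}$ is legitimate at each stage; the vanishing diagonal $g_{ii}=0$ is a harmless $O(1/N)$ nuisance worth flagging.
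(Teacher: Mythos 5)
Your proposal is correct, and since the paper does not reprove this proposition but simply cites \cite[Prop.\ 4]{Bolt2}, the relevant comparison is with Bolthausen's original argument, which your proof matches in spirit: induction on $k$ driven by the exact identity $\tbf{g}^{(k+1)}=\tbf{Q}^{(k)}\tbf{g}^{(k)}\tbf{Q}^{(k)}$, together with a representation of the conditionally Gaussian matrix in a $\cG_{k-1}$-measurable orthonormal basis adapted to $\phi^{(1)},\dots,\phi^{(k)}$, so that passing from $\cG_{k-1}$ to $\cG_k$ exposes exactly the row/column of coefficients indexed by $k$ and leaves the block $\{Y_{ab}\}_{a,b\ge k+1}$ (i.e.\ $\tbf{g}^{(k+1)}$) untouched. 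You correctly flag the one point of imprecision carried over from the statement itself, namely the deterministic diagonal constraint $g_{ii}\equiv0$, which makes the base-case covariance formula hold only up to a negligible rank-$N$ perturbation; this is inherited from the conventions of \cite{Bolt2} and does not affect any of the downstream estimates.
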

The main result of \cite{Bolt1} is summarized in the following proposition. 
\begin{prop}{(\cite[Prop. 2.5]{Bolt1}, \cite[Prop. 6]{Bolt2})} \label{prop:mi}
 For every $k\in\mathbb{N}$ and $  s < k$, one has
		\[ \langle \tbf{m}^{(k)}, \phi^{(s)}\rangle \simeq \gamma_s,\hspace{0.5cm}\langle \tbf{m}^{(k)}, \tbf{m}^{(s)}\rangle \simeq \alpha_s, \hspace{0.5cm} \langle \tbf{m}^{(k)},\tbf{m}^{(k)}  \rangle \simeq q.  \]
\end{prop}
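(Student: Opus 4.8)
The plan is to prove the three estimates simultaneously by induction on $k$, after strengthening the statement to a full \emph{state evolution} hypothesis: for each fixed $k$, (a) the empirical distribution over $i$ of the rows $(h^{(2)}_i,\dots,h^{(k)}_i)$ of the external fields $\tbf{h}^{(2)},\dots,\tbf{h}^{(k)}$ is $\simeq$-close to that of a Gaussian vector with mean $h$ and a deterministic covariance built from the sequences in Lemma \ref{lm:seqlemma}, and (b) $\langle\zeta^{(s)},\zeta^{(t)}\rangle\simeq\delta_{st}$, $\langle\phi^{(s)},\phi^{(t)}\rangle=\delta_{st}$, $\langle\tbf{m}^{(j)},\phi^{(s)}\rangle\simeq\gamma_s$ and $\langle\tbf{m}^{(j)},\tbf{m}^{(l)}\rangle\simeq\alpha_{\min\{j,l\}}$ (with $\langle\tbf{m}^{(j)},\tbf{m}^{(j)}\rangle\simeq q$) for all indices $\le k$. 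The proposition is contained in (b). The base case $k=1$ is immediate since $\tbf{m}^{(1)}=\sqrt{q}\,\tbf{1}$.

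For the inductive step, assuming the hypothesis through level $k$, I would condition on $\cG_{k-1}$ and apply Proposition \ref{prop:intmatrix}(3): then $\zeta^{(k)}=\bar{\tbf{g}}^{(k)}\phi^{(k)}$ is Gaussian with covariance $Q^{(k-1)}_{ij}+N^{-1}\phi^{(k)}_i\phi^{(k)}_j$, which differs from the identity by a correction of rank $\le k$ and operator norm $O(k/N)$; hence $\zeta^{(k)}$ is a vector of i.i.d.\ standard Gaussians up to an $O(N^{-1/2})$ perturbation, orthogonal to $\phi^{(1)},\dots,\phi^{(k)}$ to the same order since $\bar{\tbf{g}}^{(k)}\phi^{(s)}=0$ for $s<k$ and $\langle\zeta^{(k)},\phi^{(k)}\rangle$ has conditional variance $O(1/N)$. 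Feeding this and the inductive law of $\zeta^{(1)},\dots,\zeta^{(k-1)}$ into $\tbf{h}^{(k+1)}=h\tbf{1}+\beta\sum_{s<k}\gamma_s\zeta^{(s)}+\beta\sqrt{q-\Gamma_{k-1}^2}\,\zeta^{(k)}$ shows that the row $(h^{(2)}_i,\dots,h^{(k+1)}_i)$ is approximately Gaussian, with new diagonal variance $\beta^2\Gamma_{k-1}^2+\beta^2(q-\Gamma_{k-1}^2)=\beta^2q$ and, for $2\le j\le k$, covariance $\beta^2\alpha_{j-1}$ with $h^{(j)}_i$ (using $\gamma_{j-1}\sqrt{q-\Gamma_{j-2}^2}=\alpha_{j-1}-\Gamma_{j-2}^2$). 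Gaussian concentration of the Lipschitz averages $N^{-1}\sum_i\tanh^2(h^{(k+1)}_i)$ and $N^{-1}\sum_i\tanh(h^{(k+1)}_i)\tanh(h^{(j)}_i)$ then yields $\langle\tbf{m}^{(k+1)},\tbf{m}^{(k+1)}\rangle\simeq E\tanh^2(\beta\sqrt{q}Z+h)=q$ and $\langle\tbf{m}^{(k+1)},\tbf{m}^{(j)}\rangle\simeq\psi(\alpha_{j-1})=\alpha_j$ for $j\ge2$ (the last step being the definition of $\psi$ together with $\alpha_j=\psi(\alpha_{j-1})$ from Lemma \ref{lm:seqlemma}; the case $j=1$ is a one-line computation using $\tbf{m}^{(1)}=\sqrt{q}\,\tbf{1}$ and $\alpha_1=\sqrt{q}\gamma_1$).

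Once all $\tbf{m}$-overlaps are in hand, the remaining assertions are linear algebra: by the Gram--Schmidt formula \eqref{eq:mk=1}, $\phi^{(s)}$ is a $\simeq$-deterministic linear combination of $\tbf{m}^{(1)},\dots,\tbf{m}^{(s)}$, so $\langle\tbf{m}^{(k+1)},\phi^{(s)}\rangle$ is the matching combination of $\alpha_1,\dots,\alpha_s$, which collapses to $\gamma_s$ precisely because $\alpha_s=\Gamma_{s-1}^2+\gamma_s\sqrt{q-\Gamma_{s-1}^2}$; the same computation gives $\langle\phi^{(s)},\phi^{(k+1)}\rangle=0$, closing the induction. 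An alternative route to the $\tbf{m}$-overlaps, which I would keep in reserve, is conditional Gaussian integration by parts: since $\tbf{h}^{(k+1)}$ depends linearly on each $\zeta^{(s)}$ with explicit coefficient, $\langle\tanh(\tbf{h}^{(k+1)}),\zeta^{(s)}\rangle$ concentrates on that coefficient times $\langle\sech^2(\tbf{h}^{(k+1)}),\tbf{1}\rangle$ — the Onsager term — and the reaction terms in \eqref{eq:mk=1} are tuned so that the recursion closes on $\psi$ and on the definition of $\gamma_k$, which is how \cite{Bolt1} organizes things.

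I expect the real difficulty to lie in the error analysis rather than in the algebra: one must show that the $O(N^{-1/2})$ deviations from exact orthogonality — from the corank-$(k-1)$ defect of $\tbf{Q}^{(k-1)}$ and from $\langle\zeta^{(k)},\phi^{(k)}\rangle\neq0$ — together with the errors picked up at each use of concentration and of Stein's lemma, do not compound over the $k$ steps so as to break the bound $\mathbb{P}(|X_N-Y_N|>t)\le Ce^{-ct^2N}$ defining $\simeq$; equivalently, that $c,C$ can be chosen uniform in $N$ (they may depend on $k$ and on $(\beta,h)$). This is where the strict inequalities of Lemma \ref{lm:seqlemma}, notably $\Gamma_{k-1}^2<\alpha_k<q$ keeping $\sqrt{q-\Gamma_{k-1}^2}$ bounded away from zero, make the divisions in $\gamma_k$ and in \eqref{eq:mk=1} harmless, and where the $\mathbb{P}$-a.s.\ well-definedness of $\phi^{(k+1)}$ (\cite[Lemma 5]{Bolt2}) is used to control the Gram--Schmidt normalizations.
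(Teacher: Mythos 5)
The paper does not prove Proposition \ref{prop:mi}: it is quoted verbatim from \cite[Prop.\ 2.5]{Bolt1} and \cite[Prop.\ 6]{Bolt2}, with a blanket reference to those sources for the proofs. Your sketch faithfully reconstructs Bolthausen's own argument: a strengthened state-evolution induction over $k$, conditional Gaussianity of $\zeta^{(k)}$ from Proposition \ref{prop:intmatrix}(3), the covariance bookkeeping $\gamma_{j-1}\sqrt{q-\Gamma_{j-2}^2}=\alpha_{j-1}-\Gamma_{j-2}^2$ that closes the recursion through $\psi$ and the definitions of $\alpha_k,\gamma_k,\Gamma_k$, Lipschitz--Gaussian concentration to pass from empirical to population averages, and Gram--Schmidt (together with $\|\tbf m^{(s)}\|^2\simeq q$, $\sum_{r<s}\langle\tbf m^{(s)},\phi^{(r)}\rangle^2\simeq\Gamma_{s-1}^2$) to read off $\langle\tbf m^{(k)},\phi^{(s)}\rangle\simeq\gamma_s$. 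One slip worth correcting: the deviation of the conditional covariance $\tbf Q^{(k-1)}+N^{-1}\phi^{(k)}\otimes\phi^{(k)}$ from the identity has operator norm of order one, not $O(k/N)$, since $\tbf P^{(k-1)}$ is a genuine rank-$(k-1)$ orthogonal projection; what is actually $O(k/N)$, and what the argument needs, is the normalized squared Hilbert--Schmidt size of that deviation, equivalently the typical size $\|\tbf P^{(k-1)}\eta\|^2$ of the low-rank perturbation of an i.i.d.\ Gaussian $\eta$, which is what controls empirical averages of Lipschitz test functions and hence keeps the $\simeq$-bounds uniform in $N$.
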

The next lemma collects a few auxiliary results that are helpful in the sequel.
\begin{lemma}{(\cite[Lemmas 11, 14, 15(b)]{Bolt2})}\label{lm:aux}
\begin{enumerate}[1)]
\item For every $k\in\mathbb{N}$, $\langle \phi^{(k)}, \zeta^{(k)}\rangle =\sqrt 2\,\langle \phi^{(k)}, \textbf{g}^{(k)}\phi^{(k)}\rangle $ is unconditionally Gaussian with variance $2/N$.
\item For every Lipschitz continuous $f:\mathbb{R}\to\mathbb{R}$ with $|f(x)|\leq C(1+|x|)$ for some $C>0$, one has for all $k\geq 2$ that
		\[\lim_{N\to\infty} \bE \bigg| \frac1N\sum_{i=1}^N f(h_i^{(k+1)}) - Ef(\beta\sqrt{q}Z+h)\bigg|=0.\]
\item For every $k\in \mathbb{N}$ and $t>0$ it holds true that
		\[\lim_{N\to\infty} \bP\big( \big\{ \|\zeta^{(k)}\|\geq 1+t\big\}\big) =0.\]

\end{enumerate}
\end{lemma}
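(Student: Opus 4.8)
\emph{Proof plan.} All three items are due to Bolthausen \cite{Bolt2}, so I only sketch the arguments; each rests on the conditional Gaussian structure of Proposition~\ref{prop:intmatrix} together with Gaussian concentration. For item~1), observe first that $\langle\phi^{(k)},\zeta^{(k)}\rangle = \langle\phi^{(k)},\bar{\tbf{g}}^{(k)}\phi^{(k)}\rangle = \sqrt2\,\langle\phi^{(k)},\tbf{g}^{(k)}\phi^{(k)}\rangle$, using $\langle\phi^{(k)},(\tbf{g}^{(k)})^T\phi^{(k)}\rangle = \langle\tbf{g}^{(k)}\phi^{(k)},\phi^{(k)}\rangle$. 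Since $\phi^{(k)}$ is $\cG_{k-1}$-measurable and, by Proposition~\ref{prop:intmatrix}(3), $\zeta^{(k)}$ is conditionally centered Gaussian given $\cG_{k-1}$, the scalar $\langle\phi^{(k)},\zeta^{(k)}\rangle$ is conditionally centered Gaussian with conditional variance
\[ \bE_{k-1}\langle\phi^{(k)},\zeta^{(k)}\rangle^2 = \frac{1}{N^2}\sum_{i,j}\phi^{(k)}_i\phi^{(k)}_j\Big(Q^{(k-1)}_{ij}+\tfrac1N\phi^{(k)}_i\phi^{(k)}_j\Big) = \frac2N, \]
where the last equality uses $\tbf{Q}^{(k-1)}\phi^{(k)}=\phi^{(k)}$ (valid since $\phi^{(k)}\perp\phi^{(s)}$ for $s<k$) and $\|\phi^{(k)}\|=1$. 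As this variance does not depend on $\cG_{k-1}$, the conditional law of $\langle\phi^{(k)},\zeta^{(k)}\rangle$ is also its unconditional law, so $\langle\phi^{(k)},\zeta^{(k)}\rangle\sim\cN(0,2/N)$.

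For item~3), the same covariance formula gives $\bE_{k-1}\|\zeta^{(k)}\|^2 = \frac1N\sum_i\big(Q^{(k-1)}_{ii}+\frac1N(\phi^{(k)}_i)^2\big) = 1+O(1/N)$, which is deterministic and tends to $1$. Since $\|\zeta^{(k)}\|=\|\bar{\tbf{g}}^{(k)}\phi^{(k)}\|$ is a Lipschitz function of the conditionally Gaussian matrix $\tbf{g}^{(k)}$, with the appropriate scaling from the covariance in Proposition~\ref{prop:intmatrix}(1), the Gaussian concentration inequality yields $\bP\big(\,\big|\,\|\zeta^{(k)}\|-\bE_{k-1}\|\zeta^{(k)}\|\,\big|>s\mid\cG_{k-1}\big)\le Ce^{-cNs^2}$ for every $s>0$. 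Since $\bE_{k-1}\|\zeta^{(k)}\|\le\sqrt{\bE_{k-1}\|\zeta^{(k)}\|^2}\le 1+t/2$ for $N$ large, taking $s=t/2$ and integrating over $\cG_{k-1}$ gives $\bP(\|\zeta^{(k)}\|\ge 1+t)\le Ce^{-cNt^2/4}\to 0$.

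Item~2) is the substantive point, and I would reduce it to the statement that the empirical measure $\frac1N\sum_{i=1}^N\delta_{h_i^{(k+1)}}$ converges weakly, in probability, to the law $\cN(h,\beta^2 q)$ of $\beta\sqrt q\,Z+h$; combined with $\frac1N\sum_i (h^{(k+1)}_i)^2 \to h^2+\beta^2 q$ in probability (which, since $f$ is Lipschitz with $|f(x)|\le C(1+|x|)$, provides the uniform integrability needed to pass to the limit), this upgrades to $\frac1N\sum_i f(h_i^{(k+1)})\to Ef(\beta\sqrt q Z+h)$ in $L^1$. To identify the limiting empirical measure, recall that $h_i^{(k+1)} = h+\beta\sum_{s=1}^{k-1}\gamma_s\zeta_i^{(s)}+\beta\sqrt{q-\Gamma_{k-1}^2}\,\zeta_i^{(k)}$, so it suffices to show that the joint empirical distribution of $(\zeta_i^{(1)},\dots,\zeta_i^{(k)})_{1\le i\le N}$ converges to that of a standard Gaussian vector in $\rr^k$. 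Asymptotic orthonormality $\langle\zeta^{(s)},\zeta^{(t)}\rangle\simeq\delta_{st}$ follows from item~3) (and its lower-bound companion) for $s=t$, and for $s<t$ from $\bE_{t-1}\langle\zeta^{(s)},\zeta^{(t)}\rangle=0$ (because $\zeta^{(s)}$ is $\cG_{t-1}$-measurable and $\bE_{t-1}\zeta^{(t)}=0$) together with concentration at scale $O(N^{-1/2})$; the joint asymptotic Gaussianity then follows from a method-of-moments computation of $\frac1N\sum_i\prod_s(\zeta^{(s)}_i)^{p_s}$, peeling off one $\cG_{s-1}$-conditional Gaussian factor at a time and controlling the $O(1/N)$ corrections, exactly as in \cite{Bolt1}. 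Since $\sum_{s=1}^{k-1}\gamma_s^2+(q-\Gamma_{k-1}^2)=\Gamma_{k-1}^2+(q-\Gamma_{k-1}^2)=q$ by Lemma~\ref{lm:seqlemma}(3), the limiting law of $h^{(k+1)}_i$ is indeed $\cN(h,\beta^2 q)$. The main obstacle is precisely this empirical central limit theorem for the $\zeta^{(s)}$, which is where Bolthausen's recursive construction enters essentially; tracking the conditioning and the lower-order corrections is the delicate part, but none of it goes beyond the methods of \cite{Bolt1,Bolt2}.
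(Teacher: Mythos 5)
The paper itself does not prove this lemma; it simply quotes it from Bolthausen \cite[Lemmas 11, 14, 15(b)]{Bolt2}, so there is no proof in the paper to compare against. Your sketch is consistent with the cited arguments and essentially correct: items 1) and 3) follow cleanly from Proposition~\ref{prop:intmatrix}(3) and Gaussian concentration, with the covariance computation $\frac{1}{N^2}\sum_{i,j}\phi^{(k)}_i\phi^{(k)}_j(Q^{(k-1)}_{ij}+\tfrac1N\phi^{(k)}_i\phi^{(k)}_j)=\tfrac2N$ and $\bE_{k-1}\|\zeta^{(k)}\|^2 = 1 + \tfrac{2-k}{N}$ both being exact identities that make the conditional laws effectively unconditional; and for item 2) you correctly identify that the work lies in establishing asymptotic joint Gaussianity of the empirical law of $(\zeta^{(1)}_i,\dots,\zeta^{(k)}_i)_{i\le N}$, which is the content of Bolthausen's recursive moment computation in \cite{Bolt1}. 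One small remark: the $L^1(d\bP)$ upgrade in item 2) uses the linear growth bound $|f(x)|\le C(1+|x|)$ (to get uniform integrability of $\frac1N\sum_i f(h_i^{(k+1)})$ via $L^2(d\bP)$ boundedness), while the Lipschitz hypothesis is what you actually use to pass from weak convergence of the empirical measure to convergence of the empirical average; your phrasing somewhat conflates these two roles, but the logic is sound.
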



\section{Conditional Moments of Reduced Partition Function}\label{sec:2ndmoment}

Using Bolthausen's magnetizations, we compute in this section the first two conditional moments of a suitably reduced partition function onto a subset $S\subset \{-1,1\}^N$ defined below. This will suffice to establish Theorem \ref{thm:main}, as explained in Section \ref{sec:proofmain}.

Let $\eps>0$ and $ k  \in \mathbb{N} $ be fixed. We define the set $S_{\eps,k} \subset \{-1,1\}^N$ through
		\begin{equation}\label{eq:Sepsk}
		\begin{split}
		 S_{\eps,k} = \Big\{ \sigma\in \{-1,1\}^N: | \langle \sigma- \tbf{m}^{(k+1)}, \phi^{(s)}\rangle | \leq \eps/k,\, \,\,\forall\,\, 1\leq s\leq k\Big\},  
		 \end{split}\end{equation}
with $\phi^{(s)}, \gamma_s$ from Section \ref{sec:mi}. We define the reduced partition function $Z_N^{(k+1)}(S_{\eps,k})$ by
		\begin{equation}\label{eq:ZNred}
		\begin{split} Z^{(k+1)}_N(S_{\eps,k}) &= \sum_{\sigma\in S_{\eps,k} } \pf(\sigma) \exp \bigg( \frac{N\beta}{\sqrt 2} \langle \sigma, \tbf{g}^{(k+1)} \sigma\rangle \bigg)\\
		& = \sum_{\sigma\in S_{\eps,k} } \pf(\sigma) \exp \bigg( \frac{N\beta}2 \langle \sigma, \bar{\tbf{g}}^{(k+1)} \sigma\rangle \bigg), 
		\end{split}\end{equation}
where $\pf:\{-1,1\}^N\to (0,1)$ denotes the coin-tossing measure 
		\begin{equation} \label{eq:pfree} \pf(\sigma) = \prod_{i=1}^N \frac12 \frac{ \exp \big(h_i^{(k+1)}\sigma_i \big) }{ \cosh\big(h_i^{(k+1)}\big)}. \end{equation}
The following lemma records that $ \pf(S_{\eps,k}^c)$ is exponentially small in $N$. 
 \begin{lemma}\label{lm:conc}
 Let $\eps>0$, $k\in \mathbb{N}$, and let $S_{\eps,k} $ and $\pf$ be defined as in \eqref{eq:Sepsk} and \eqref{eq:pfree}, respectively. Then, 
there exist $c, C >0$, independent of $N$ and $\eps$, such that 
		\begin{equation}\label{eq:concpfree}
		 p_{\emph{free}}(S_{\eps,k}) \geq 1-   C e^{- c N\eps^2 }. 
		\end{equation} 
\end{lemma}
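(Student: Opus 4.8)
The plan is to prove \eqref{eq:concpfree} by a standard concentration-of-measure argument for the product measure $\pf$, applied to each of the $k$ linear statistics $\sigma \mapsto \langle \sigma - \tbf{m}^{(k+1)}, \phi^{(s)}\rangle$. First I would observe that under $\pf$ the coordinates $\sigma_i$ are independent with $\mathbb{E}_{\pf}[\sigma_i] = \tanh(h_i^{(k+1)}) = m_i^{(k+1)}$, so that for each fixed $s$, $\mathbb{E}_{\pf}\langle \sigma - \tbf{m}^{(k+1)}, \phi^{(s)}\rangle = 0$. The quantity $\langle \sigma - \tbf{m}^{(k+1)}, \phi^{(s)}\rangle = \frac1N\sum_i \phi^{(s)}_i(\sigma_i - m_i^{(k+1)})$ is a sum of independent, bounded, mean-zero random variables (bounded by $2|\phi^{(s)}_i|/N$ in absolute value), so Hoeffding's inequality gives
		\[ \pf\big( | \langle \sigma - \tbf{m}^{(k+1)}, \phi^{(s)}\rangle | > \eps/k \big) \leq 2\exp\!\Big( - \frac{ (\eps/k)^2 N^2 }{ 2\sum_{i=1}^N \phi^{(s)}_i{}^2 } \Big) = 2\exp\!\Big( - \frac{ \eps^2 N }{ 2 k^2 } \Big), \]
where I used $\sum_{i=1}^N \phi^{(s)}_i{}^2 = N\|\phi^{(s)}\|^2 = N$ since the $\phi^{(s)}$ are unit vectors in the normalized norm.

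Next I would take a union bound over $s = 1,\dots,k$, which yields $\pf(S_{\eps,k}^c) \leq 2k\exp(-\eps^2 N/(2k^2))$. Since $k$ is fixed (independent of $N$), this is of the form $C e^{-cN\eps^2}$ with $c = 1/(2k^2)$ and $C = 2k$; absorbing $k$-dependence into the constants is legitimate because the statement only asks for constants independent of $N$ and $\eps$. This gives \eqref{eq:concpfree}.

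I expect essentially no serious obstacle here — this is a routine concentration estimate. The only mild subtlety is that $\tbf{m}^{(k+1)}$, the vectors $\phi^{(s)}$, and the fields $h_i^{(k+1)}$ are all random (measurable with respect to $\cG_k$), so one should be careful that the concentration statement is understood conditionally on $\cG_k$: for \emph{every} realization of the disorder, $\pf$ is a genuine product measure on $\{-1,1\}^N$ with the stated marginals, the $\phi^{(s)}$ are deterministically orthonormal (hence $\sum_i \phi^{(s)}_i{}^2 = N$ always), and Hoeffding applies verbatim. Thus the bound holds pathwise in the disorder with the same constants, which is exactly what the lemma asserts. An alternative to Hoeffding would be the bounded-differences (McDiarmid) inequality applied to $\sigma \mapsto \max_{1\le s\le k}|\langle \sigma - \tbf{m}^{(k+1)},\phi^{(s)}\rangle|$ together with a bound on its $\pf$-mean via Cauchy–Schwarz on the variance; I would prefer the direct Hoeffding-plus-union-bound route as it is cleaner and gives explicit constants.
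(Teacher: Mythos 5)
Your proof is correct and takes essentially the same approach as the paper: the paper's argument is a Chernoff bound combined with the pointwise inequality $\log\cosh(x+y)\le\log\cosh(x)+y\tanh(x)+y^2/2$, which is precisely Hoeffding's lemma for a $\pm1$ variable with mean $\tanh(x)$, followed by a union bound over $s=1,\dots,k$. Invoking Hoeffding directly, as you do, is a clean shortcut that lands on the same constants $c=1/(2k^2)$, $C=2k$, and your remark that the estimate holds pathwise in the disorder because $\pf$ is a genuine product measure for every realization is exactly the right point to make.
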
	
\begin{proof}
By a standard union bound, we have that
		\[\begin{split}
		\pf(S_{\eps,k}^c) 
		\leq &\, k \max_{s=1,\dots,k} \pf \Big(\Big\{ \sigma\in \{-1,1\}^N:  \langle \sigma, \phi^{(s)}\rangle - \langle \tbf{m}^{(k+1)}, \phi^{(s)}\rangle  >  \frac\eps k  \Big\}\Big) \\
		&\, +  k \max_{s=1,\dots,k} \pf \Big(\Big\{ \sigma\in \{-1,1\}^N: \langle \tbf{m}^{(k+1)}, \phi^{(s)}\rangle- \langle \sigma, \phi^{(s)}\rangle  >  \frac\eps k  \Big\}\Big), 
		\end{split} \]
which implies
		\[\begin{split}
		&\pf(S_{\eps,k}^c)\\
		& \leq  k \max_{s=1,\dots,k} \inf_{\lambda \geq 0} \exp \bigg[-N \bigg( \frac{\lambda \eps}k - \frac1N\sum_{i=1}^N  \log\frac{\cosh\big( h_i^{(k+1)}+\lambda \phi_i^{(s)}\big)}{\cosh\big( h_i^{(k+1)}\big)} + \lambda\langle \tbf{m}^{(k+1)}, \phi^{(s)}\rangle\bigg) \bigg]\\
		 &\hspace{0.5cm}+ k \max_{s=1,\dots,k} \inf_{\lambda \geq 0} \exp \bigg[-N \bigg( \frac{\lambda \eps}k - \frac1N\sum_{i=1}^N  \log\frac{\cosh\big( h_i^{(k+1)}-\lambda \phi_i^{(s)}\big)}{\cosh\big( h_i^{(k+1)}\big)} - \lambda\langle \tbf{m}^{(k+1)}, \phi^{(s)}\rangle\bigg) \bigg].\\
		\end{split} \]
Using the pointwise bound $ \log\cosh(x+y)\leq \log\cosh(x) + y\tanh(x) + \frac{y^2}2 $ for $x,y\in\mathbb{R}$, $\langle\phi^{(s)}, \phi^{(s)}\rangle=1$ and the identity $\tanh\big( \tbf{h}^{(k+1)}\big) = \tbf{m}^{(k+1)} $, we obtain
		\[\begin{split}
		\pf(S_{\eps,k}^c) \leq &\, 2k  \inf_{\lambda \geq 0} \exp \bigg[- \frac{N \lambda   \eps}k   + \frac{N\lambda^2}2 \bigg] = 2 k  e^{-   N\eps^2/(2k^2)}.
		\end{split} \]
This concludes \eqref{eq:concpfree} for $c=c_k= 1/(2k^2), C =C_k= 2k$. \end{proof}

We notice that the constants $ c, C>0 $ in \eqref{eq:concpfree} are independent of the realization of the disorder $\{g_{ij}\}$. Thus, $a.s.$ in the disorder (so that $\phi^{(s)}$, $s=1,\dots,k$, and hence $\bar{\tbf{g}}^{(k+1)}$ as well as $\pf$ are well-defined), $S_{\eps,k} \neq \emptyset$ for $N$ large enough. 

The next lemma determines the first conditional moment of the reduced partition function $Z^{(k+1)}_N(S_{\eps,k})$ and is valid in the full high temperature regime \eqref{eq:ATline}. 
\begin{lemma}\label{lm:1stmoment}
Let $\eps>0 $, $k\in \mathbb{N}$ and let $S_{\eps,k}$, $Z^{(k+1)}_N(S_{\eps,k})$ and $\pf$ be as in \eqref{eq:Sepsk}, \eqref{eq:ZNred} and \eqref{eq:pfree}, respectively. Assume that $(\beta,h)$ satisfy the AT condition \eqref{eq:ATline}. Then 
		\begin{equation}\label{eq:1stmoment}
		\lim_{\eps \to 0} \lim_{k\to \infty} \limsup_{N\to\infty} \mathbb{E} \bigg| \frac1N \log \bE_k \, Z_N^{(k+1)}(S_{\eps,k}) - \frac{\beta^2}4(1-q)^2\bigg| = 0
		\end{equation}
\end{lemma}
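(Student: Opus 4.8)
The plan is to compute the conditional first moment $\bE_k Z_N^{(k+1)}(S_{\eps,k})$ explicitly by first discarding the restriction to $S_{\eps,k}$ at negligible cost, then performing a Gaussian integration, and finally identifying the resulting deterministic quantity via Lemma~\ref{lm:aux}. First I would observe the elementary sandwich
\[
\bE_k \sum_{\sigma\in S_{\eps,k}}\pf(\sigma)\exp\!\Big(\tfrac{N\beta}{2}\langle\sigma,\bar{\tbf g}^{(k+1)}\sigma\rangle\Big)
\;\leq\;
\bE_k \sum_{\sigma\in\{-1,1\}^N}\pf(\sigma)\exp\!\Big(\tfrac{N\beta}{2}\langle\sigma,\bar{\tbf g}^{(k+1)}\sigma\rangle\Big),
\]
so the full (unrestricted) first moment is an upper bound; for the matching lower bound I would use Lemma~\ref{lm:conc} together with the fact that on $S_{\eps,k}$ the quadratic form $\langle\sigma,\bar{\tbf g}^{(k+1)}\sigma\rangle$ differs from its value on a "typical" configuration only by an error controlled by $\eps$, because $\bar{\tbf g}^{(k+1)}$ annihilates the directions $\phi^{(s)}$, $s\leq k$. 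The point is that restricting to $S_{\eps,k}$ changes $\tfrac1N\log\bE_k Z_N^{(k+1)}$ by at most $O(\eps)$ plus an exponentially small term from $\pf(S_{\eps,k}^c)$, which vanishes after the iterated limit $\lim_{\eps\to0}\lim_{k\to\infty}\limsup_N$.

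Next I would carry out the Gaussian computation for the unrestricted moment. Conditionally on $\cG_k$, the matrix $\bar{\tbf g}^{(k+1)}$ is Gaussian (Proposition~\ref{prop:intmatrix}) with covariance governed by $\tbf Q^{(k)}$, and since $\langle\sigma,\bar{\tbf g}^{(k+1)}\sigma\rangle$ is linear in $\bar{\tbf g}^{(k+1)}$, the conditional expectation of the exponential is $\exp\big(\tfrac{N\beta^2}{8}\,\mathrm{Var}_k\langle\sigma,\bar{\tbf g}^{(k+1)}\sigma\rangle\big)$. Using the covariance formula one finds $\tfrac14 N^2\,\mathrm{Var}_k\langle\sigma,\bar{\tbf g}^{(k+1)}\sigma\rangle = \tfrac12\big(\|\tbf Q^{(k)}\sigma\|^2\big)^2 N^2 + O(N)$ (the $O(N)$ coming from diagonal/lower-order terms), so that
\[
\bE_k Z_N^{(k+1)}(S_{\eps,k}) \;=\; \sum_{\sigma\in S_{\eps,k}}\pf(\sigma)\,\exp\!\Big(\tfrac{N\beta^2}{4}\big\|\tbf Q^{(k)}\sigma\big\|^4 + O(1)\Big).
\]
On $S_{\eps,k}$ one has $\langle\sigma,\phi^{(s)}\rangle = \langle\tbf m^{(k+1)},\phi^{(s)}\rangle + O(\eps/k)$, hence $\|\tbf P^{(k)}\sigma\|^2 = \sum_{s\leq k}\langle\sigma,\phi^{(s)}\rangle^2 = \sum_{s\leq k}\gamma_s^2 + (\text{error}) \simeq \Gamma_k^2$ by Proposition~\ref{prop:mi}, and since $\|\sigma\|^2=1$ this gives $\|\tbf Q^{(k)}\sigma\|^2 = 1-\Gamma_k^2 + o(1)$. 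By Lemma~\ref{lm:seqlemma}(3), $\Gamma_k^2\to q$ as $k\to\infty$ under \eqref{eq:ATline}, so $\|\tbf Q^{(k)}\sigma\|^4 \to (1-q)^2$. Combining with $\sum_{\sigma\in S_{\eps,k}}\pf(\sigma)\leq 1$ and $\geq 1-Ce^{-cN\eps^2}$ yields $\tfrac1N\log\bE_k Z_N^{(k+1)}(S_{\eps,k}) = \tfrac{\beta^2}{4}(1-\Gamma_k^2)^2 + o(1) + O(\eps)$, and taking $\limsup_N$, then $k\to\infty$, then $\eps\to0$ gives \eqref{eq:1stmoment}.

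The main obstacle is controlling the error terms uniformly enough that they survive the iterated limit — in particular making precise the claim that $\|\tbf Q^{(k)}\sigma\|^2 \simeq 1-\Gamma_k^2$ on $S_{\eps,k}$ with a concentration estimate whose constants do not degrade too fast in $k$, since Proposition~\ref{prop:mi} controls $\langle\tbf m^{(k+1)},\phi^{(s)}\rangle$ only in the $\simeq$ sense and one is summing $k$ such terms. One must check that the $o(1)$ (in $N$) errors, after taking $\limsup_N$, leave a quantity that tends to $\tfrac{\beta^2}{4}(1-q)^2$ when $k\to\infty$; this is where the exponential-rate statements in Lemma~\ref{lm:seqlemma} and the $\simeq$-concentration in Proposition~\ref{prop:mi} (which gives Gaussian tails with $k$-dependent but $N$-independent constants) are essential. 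A secondary technical point is justifying that the $O(N)$ lower-order term in the variance of the quadratic form, and the diagonal contributions (recall $g_{ii}=0$), contribute only $O(1)$ to $\tfrac1N\log\bE_k Z_N^{(k+1)}$; this is routine but must be done with care because $\sigma$ ranges over a set whose size is itself exponential in $N$. Everything else — the union bound, the Gaussian moment-generating function, and the algebra identifying $\tfrac{\beta^2}{4}(1-q)^2$ — is straightforward.
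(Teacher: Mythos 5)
Your second paragraph is essentially the paper's proof: compute the conditional Gaussian expectation per $\sigma$ using Proposition~\ref{prop:intmatrix}, observe that on $S_{\eps,k}$ the resulting exponent $\tfrac{\beta^2 N}{4}\big(1-\sum_{s\leq k}\langle\sigma,\phi^{(s)}\rangle^2\big)^2$ differs uniformly by $O(N\eps)$ from $\tfrac{\beta^2 N}{4}\big(1-\sum_{s\leq k}\langle\tbf m^{(k+1)},\phi^{(s)}\rangle^2\big)^2$, use $1-Ce^{-cN\eps^2}\leq\pf(S_{\eps,k})\leq1$ from Lemma~\ref{lm:conc}, and then pass to the deterministic limit $\Gamma_k^2\to q$ via Proposition~\ref{prop:mi} and Lemma~\ref{lm:seqlemma}. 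That is exactly the argument in the paper.

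Two comments. First, your opening ``sandwich'' with the \emph{unrestricted} conditional first moment is a red herring and would in fact fail if you tried to lean on it: for $\beta$ not small, the unrestricted sum $\sum_{\sigma}\pf(\sigma)\exp\big(\tfrac{\beta^2 N}{4}\|\tbf Q^{(k)}\sigma\|^4\big)$ can be dominated by exponentially rare $\sigma$ with $\|\tbf P^{(k)}\sigma\|^2$ far below $\Gamma_k^2$, so $\tfrac1N\log$ of it need not tend to $\tfrac{\beta^2}{4}(1-q)^2$. That this can happen is precisely why the paper introduces the restriction to $S_{\eps,k}$; the correct upper bound on the restricted first moment comes simply from $\pf(S_{\eps,k})\leq1$ (which you do invoke later), not from dropping the restriction. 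Second, the variance computation is exact: by Proposition~\ref{prop:intmatrix} one has $\mathrm{Var}_k\langle\sigma,\tbf g^{(k+1)}\sigma\rangle=\tfrac1N\langle\sigma,\tbf Q^{(k)}\sigma\rangle^2$ identically, so there are no $O(N)$ or $O(1)$ ``diagonal/lower-order'' remainders to worry about. Likewise, your concerns about $k$-uniformity of the concentration and about the exponential size of the configuration space are both moot here: the approximation $\big|\sum_{s\leq k}\langle\sigma,\phi^{(s)}\rangle^2-\sum_{s\leq k}\langle\tbf m^{(k+1)},\phi^{(s)}\rangle^2\big|\leq C\eps$ is deterministic and uniform over $\sigma\in S_{\eps,k}$ by construction, and the replacement of $\sum_{s\leq k}\langle\tbf m^{(k+1)},\phi^{(s)}\rangle^2$ by $\Gamma_k^2$ needs only $L^p$ convergence at \emph{fixed} $k$ as $N\to\infty$, which is what Proposition~\ref{prop:mi} supplies; the iterated limit $\lim_{\eps\to0}\lim_{k\to\infty}\limsup_N$ then absorbs any $k$-dependence of the constants.
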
	
\begin{proof}
By Proposition \ref{prop:intmatrix}, we have that 
		\[\begin{split}
		\bE_k Z^{(k+1)}_N(S_{\eps,k}) & = \sum_{\sigma\in S_{\eps,k} } \pf(\sigma) \exp \bigg( \frac{\beta^2N^2}{4} \bE_k \langle \sigma, \tbf{g}^{(k+1)}\sigma\rangle^2 \bigg)\\
		& =  \sum_{\sigma\in S_{\eps,k} } \pf(\sigma) \exp \bigg[ \frac{\beta^2N}{4}  \bigg(  1- \sum_{s=1}^k \langle \sigma,\phi^{(s)}\rangle^2\bigg)^2 \bigg].
		\end{split}\]
Centering around $\tbf{m}^{(k+1)}$ yields
		\[\begin{split}
		1- \sum_{s=1}^k  \langle \sigma,\phi^{(s)}\rangle^2 
		& = 1- \sum_{s=1}^k \langle \tbf{m}^{(k+1)}, \phi^{(s)}\rangle^2 - 2 \sum_{s=1}^k \langle \sigma- \tbf{m}^{(k+1)}, \phi^{(s)}\rangle\langle \tbf{m}^{(k+1)}, \phi^{(s)}\rangle\\
		&\hspace{0.5cm} -\sum_{s=1}^k \langle \sigma - \tbf{m}^{(k+1)}, \phi^{(s)}\rangle^2 
		\end{split}\]	
so that 
		\[\begin{split}
		&\sup_{\sigma \in S_{\eps,k}} \bigg| \bigg(1- \sum_{s=1}^k   \langle \sigma,\phi^{(s)}\rangle^2\bigg) - \bigg(1- \sum_{s=1}^k \langle \tbf{m}^{(k+1)}, \phi^{(s)}\rangle^2\bigg)\bigg|\\
		&\hspace{5cm} \leq  C \sup_{\sigma \in S_{\eps,k}} \sum_{s=1}^k |\langle \sigma - \tbf{m}^{(k+1)}, \phi^{(s)}\rangle|\leq C \eps 
		 \end{split}\]
for some $C>0$, independent of $N$ and $k$. This implies with Lemma \ref{lm:conc} that 
		\[\begin{split} &  \bigg| \frac1N\log \bE_k Z^{(k+1)}_N(S_{\eps,k})  -  \frac{\beta^2}{4} \bigg(1- \sum_{s=1}^k \langle \tbf{m}^{(k+1)}, \phi^{(s)}\rangle^2\bigg)^2  \bigg|   \leq  C \beta^2  \eps +\frac CN | \log( 1- Ce^{-c N\eps^2})|.
		\end{split}\]
Moreover, by Proposition \ref{prop:mi}, we have that	
		$ \lim_{N\to\infty}  \sum_{s=1}^k \langle \tbf{m}^{(k+1)}, \phi^{(s)}\rangle^2  =  \Gamma_{k}^2      $
in $L^p(d\mathbb{P})$, for any $p \in [1;\infty)$, so that 
		\[ \limsup_{N\to\infty} \mathbb{E} \bigg| \frac1N   \log \bE_k Z^{(k+1)}_N(S_{\eps,k}) - \frac{\beta^2}4 \big(1-\Gamma_k^2\big)^2\bigg| \leq C\beta^2\eps .  \]
Finally, since $\lim_{k\to \infty} \Gamma_k^2= q$ under the AT condition \eqref{eq:ATline}, by Lemma \ref{lm:seqlemma}, we let $N\to\infty$, then $k\to\infty $ and then $\eps\to 0$ which implies
		\[\lim_{\eps \to 0} \lim_{k\to \infty} \limsup_{N\to\infty} \mathbb{E} \bigg| \frac1N \log \bE_k \, Z_N^{(k+1)}(S_{\eps,k}) - \frac{\beta^2}4(1-q)^2\bigg| = 0.\]
\end{proof}

The following lemma computes the second conditional moment of $Z_N^{(k+1)}(S_{\eps,k})$ under the stronger high temperature condition \eqref{eq:techline}.
\begin{lemma}\label{lm:2ndmoment}
Let $\eps>0 $, $k\in \mathbb{N}$ and let $S_{\eps,k}$, $Z^{(k+1)}_N(S_{\eps,k})$ and $\pf$ be as in \eqref{eq:Sepsk}, \eqref{eq:ZNred} and \eqref{eq:pfree}, respectively. Assume that $(\beta,h)$ satisfy the condition \eqref{eq:techline}. Then
		\begin{equation}\label{eq:2ndmoment}
		\lim_{\eps \to 0}\lim_{k\to \infty} \limsup_{N\to\infty} \mathbb{E} \bigg| \frac1N \log \bE_k \, \Big[ \big(Z_N^{(k+1)}(S_{\eps,k})\big)^2 \Big] - \frac{\beta^2}2(1-q)^2\bigg| =0.
		\end{equation}
\end{lemma}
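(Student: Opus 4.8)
The plan is to compute $\bE_k[(Z_N^{(k+1)}(S_{\eps,k}))^2]$ by introducing two independent replicas $\sigma, \tau \in S_{\eps,k}$, using the Gaussian conditional structure of $\tbf{g}^{(k+1)}$ from Proposition \ref{prop:intmatrix} to carry out the $\bE_k$-average, and then showing the cross term in the resulting exponent is negligible so that the second moment factors into (essentially) the square of the first moment. Concretely, since conditionally on $\cG_k$ the matrix $\tbf{g}^{(k+1)}$ is Gaussian, one has
\[
\bE_k\big[(Z_N^{(k+1)}(S_{\eps,k}))^2\big] = \sum_{\sigma,\tau \in S_{\eps,k}} \pf(\sigma)\pf(\tau)\, \exp\bigg( \frac{\beta^2 N^2}{4}\,\bE_k\big[ \langle\sigma, \tbf{g}^{(k+1)}\sigma\rangle + \langle\tau, \tbf{g}^{(k+1)}\tau\rangle\big]^2\bigg),
\]
and expanding the square produces the two ``diagonal'' contributions $(1-\sum_s\langle\sigma,\phi^{(s)}\rangle^2)^2$ and $(1-\sum_s\langle\tau,\phi^{(s)}\rangle^2)^2$ — each handled exactly as in Lemma \ref{lm:1stmoment}, contributing $\frac{\beta^2}{4}(1-\Gamma_k^2)^2$ apiece — plus a cross term proportional to $\big(\langle\sigma,\tau\rangle - \sum_{s=1}^k \langle\sigma,\phi^{(s)}\rangle\langle\tau,\phi^{(s)}\rangle\big)^2 = \langle \tbf{Q}^{(k)}\sigma, \tbf{Q}^{(k)}\tau\rangle^2$, using $\bE_k g_{ij}^{(k+1)}g_{st}^{(k+1)} = \tfrac1N Q^{(k)}_{is}Q^{(k)}_{jt}$. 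So the whole task reduces to showing this cross term does not contribute to the exponential rate, i.e. that
\[
\frac1N \log \sum_{\sigma,\tau\in S_{\eps,k}} \pf(\sigma)\pf(\tau)\, \exp\Big( \tfrac{\beta^2 N}{2}\, \langle \tbf{Q}^{(k)}\sigma, \tbf{Q}^{(k)}\tau\rangle^2 \Big) \longrightarrow 0
\]
in the appropriate iterated-limit sense under \eqref{eq:techline}.

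To control the cross term I would first reduce to the overlap $R := \langle \tbf{Q}^{(k)}\sigma, \tbf{Q}^{(k)}\tau\rangle$. Writing $\wh\sigma = \sigma - \tbf{m}^{(k+1)}$ and using that on $S_{\eps,k}$ one has $\langle\wh\sigma,\phi^{(s)}\rangle = \mathcal{O}(\eps/k)$, one gets $\tbf{Q}^{(k)}\sigma = \tbf{Q}^{(k)}\wh\sigma + \mathcal{O}(\eps)$ in norm (since $\tbf{Q}^{(k)}\tbf{m}^{(k+1)} = \tbf{m}^{(k+1)} - \tbf{P}^{(k)}\tbf{m}^{(k+1)}$ is itself $\mathcal{O}(\eps)$ small on... no — rather $\tbf{Q}^{(k)}\sigma = \sigma - \tbf{P}^{(k)}\sigma$ and $\tbf{P}^{(k)}\sigma = \sum_s\langle\sigma,\phi^{(s)}\rangle\phi^{(s)} = \tbf{P}^{(k)}\tbf{m}^{(k+1)} + \mathcal{O}(\eps)$), and $\|\tbf{Q}^{(k)}\sigma\|^2 = 1 - \sum_s\langle\sigma,\phi^{(s)}\rangle^2 \simeq 1 - \Gamma_k^2 \to 1-q$. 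The standard way to linearize the quadratic $\exp(\tfrac{\beta^2 N}{2}R^2)$ is a Gaussian (Hubbard–Stratonovich) transform: $\exp(\tfrac{\beta^2 N}{2}R^2) = E_W \exp(\beta\sqrt{N}\, W R)$ where $W\sim\mathcal N(0,1)$, after which the $\sigma,\tau$ sums decouple and factor over coordinates, giving $\frac1N\log(\cdots) = \log E_W \exp\big(\tfrac1N\sum_i \log\cosh(h_i^{(k+1)} + \beta W (\tbf{Q}^{(k)})_i\cdot(\cdot))\big)$-type expressions; more carefully one introduces one Gaussian field per replica coupled through $W$. Using Lemma \ref{lm:aux}(2) to replace empirical averages of functions of $h_i^{(k+1)}$ by $E(\cdot)$ of the limiting field, and noting $\tbf{Q}^{(k)}$ acts nearly as the identity on the relevant fluctuation directions (its range has codimension $k$, negligible as $N\to\infty$), the rate collapses to a one-dimensional variational quantity that, after optimizing, is bounded by a multiple of $(1 - \beta^2 E\sech^2(\beta\sqrt q Z + h))$ times a nonnegative quantity — hence $\le 0$ — precisely when \eqref{eq:techline} holds, with equality forced in the limit $k\to\infty$, $\eps\to0$. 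Combining, the $\limsup$ of $\frac1N\log\bE_k[(\cdots)^2]$ is $\le \tfrac{\beta^2}{2}(1-q)^2$; the matching lower bound follows from Jensen / restricting to $\tau=\sigma$ being subdominant, or more simply from $\bE_k[X^2] \ge (\bE_k X)^2$ together with Lemma \ref{lm:1stmoment}. This yields \eqref{eq:2ndmoment}.

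The main obstacle is the cross-term estimate, and specifically pinning down exactly why \eqref{eq:techline} — rather than the weaker AT condition \eqref{eq:ATline} — is what is needed. After the Hubbard–Stratonovich step the relevant function is $\lambda \mapsto E\log\cosh(\beta\sqrt q Z + h + \lambda) $ (roughly), whose curvature at the relevant point is governed by $E\sech^2(\beta\sqrt q Z + h)$; the quadratic gain $\tfrac{\beta^2}{2}R^2$ from the disorder must be beaten by the entropic/concentration cost of forcing a nonzero overlap $R$, and the linearized balance is exactly $\beta^2 E\sech^2(\beta\sqrt q Z+h) \le 1$. One must be careful that: (i) the projection $\tbf{Q}^{(k)}$ genuinely contributes only $o(1)$ corrections — this uses that $\phi^{(1)},\dots,\phi^{(k)}$ are $\cG_{k}$-measurable and $k$ is fixed while $N\to\infty$, plus Proposition \ref{prop:mi} and Lemma \ref{lm:aux}; (ii) the restriction to $S_{\eps,k}$ (needed to make the errors in \eqref{eq:mainidea} go away) does not distort the second-moment computation — here Lemma \ref{lm:conc} shows $\pf(S_{\eps,k})$ and $\pf(S_{\eps,k}^c)$ are controlled, so replacing $S_{\eps,k}$ by the full cube changes $\frac1N\log$ by $\mathcal{O}(\eps) + \mathcal{O}(N^{-1}\log(\cdots))$; (iii) the order of limits $\limsup_{N}$, then $k\to\infty$, then $\eps\to0$ is respected throughout, since several error terms are uniform in $k$ only after taking $N\to\infty$. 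A clean way to organize (i)–(iii) is to first prove the statement with $\tbf{Q}^{(k)}$ replaced by the identity and $S_{\eps,k}$ replaced by $\{-1,1\}^N$ (a genuinely $k$-independent computation), then show the two replacements cost at most $C(\eps + \delta_k)$ with $\delta_k \to 0$.
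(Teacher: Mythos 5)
Your computation of $\bE_k\big[(Z_N^{(k+1)}(S_{\eps,k}))^2\big]$ and the identification of the cross term $\langle\sigma,\tbf{Q}^{(k)}\tau\rangle^2$ match the paper exactly, and the reduction to showing
\[
\frac1N\log\!\!\sum_{\sigma,\tau\in S_{\eps,k}}\!\!\pf(\sigma)\pf(\tau)\exp\Big(\tfrac{\beta^2N}{2}\langle\sigma,\tbf{Q}^{(k)}\tau\rangle^2\Big)\;\longrightarrow\;0
\]
is also the paper's reduction. But the proposed route for that estimate is where the gap is. The paper does \emph{not} go through a Hubbard--Stratonovich linearization and a variational problem. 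Instead it fixes $\tau\in S_{\eps,k}$, uses the layer-cake identity $E[e^{\frac{\beta^2N}{2}R^2}] = 1+\int_0^1 N\beta^2 t\, e^{\frac{N\beta^2t^2}{2}}\,\pf(|R|>t)\,dt$ with $R=\langle\sigma,\tbf{Q}^{(k)}\tau\rangle$, and bounds the tail $\pf(|R|>t)$ by a Chernoff bound using $\log\cosh(x+y)\le\log\cosh(x)+y\tanh(x)+y^2/2$, exactly as in Lemma~\ref{lm:conc}. The variance proxy that emerges is $\|\tbf{Q}^{(k)}\tau\|^2$, which on $S_{\eps,k}$ is $1-\Gamma_k^2+\mathcal O(\eps)\to 1-q$; balancing $e^{\frac{\beta^2N}{2}t^2}$ against $e^{-\frac{N}{2(1-q)}t^2}$ is precisely where $\beta^2(1-q)\le 1$, i.e.\ condition \eqref{eq:techline}, enters. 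Your heuristic --- ``the curvature of $E\log\cosh$ at the relevant point is governed by $E\sech^2$'' --- points at the right quantity, but you never actually carry out the HS decoupling or the optimization; you only assert the answer. In particular, after the HS step the integrand $\exp(\beta\sqrt N W\langle\sigma,\tbf{Q}^{(k)}\tau\rangle)$ still couples $\sigma$ and $\tau$ through a bilinear form involving the random projector $\tbf{Q}^{(k)}$, and it is not a product over coordinates, so the claimed ``decoupling and factoring over coordinates'' needs a nontrivial additional idea that is not supplied.

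A more serious problem is your final organizing suggestion: to ``first prove the statement with $\tbf{Q}^{(k)}$ replaced by the identity and $S_{\eps,k}$ replaced by $\{-1,1\}^N$, then show the two replacements cost at most $C(\eps+\delta_k)$.'' That step would fail. With the identity in place of $\tbf{Q}^{(k)}$ and no restriction on the replicas, the cross-term computation reduces to the \emph{unrestricted} second moment of the SK partition function in the presence of a field, which is known to be exponentially larger than the square of the first moment --- this is precisely why the naive second-moment method breaks for $h>0$ and why Bolthausen's conditional construction and the present paper's restriction to $S_{\eps,k}$ were introduced at all. Concretely, without the restriction and the projector, the overlap $\langle\sigma,\tau\rangle$ concentrates around $q>0$ under $\pf\otimes\pf$, and $e^{\frac{\beta^2N}{2}q^2}$ is not an $o(N)$-size error; the centering produced by $\tbf{Q}^{(k)}$ together with $\sigma,\tau\in S_{\eps,k}$ is what shifts the typical overlap to $0$. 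So the two ``replacements'' change the leading-order exponential rate; they are not $\mathcal O(\eps+\delta_k)$ perturbations. You need to keep both the projector and the set restriction present throughout the tail estimate, which is exactly what the paper's Chernoff argument does.
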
	
\begin{proof}
Proceeding as in the previous proposition, we compute
		\[\begin{split}
		 & \bE_k \, \Big[ \big(Z_N^{(k+1)}(S_{\eps,k})\big)^2 \Big]\\
		 &= \sum_{\sigma, \tau\in S_{\eps,k} } \pf(\sigma)\pf(\tau) \exp \bigg( \frac{\beta^2N^2}{4} \bE_k \Big(\langle \sigma,  \tbf{g}^{(k+1)}\sigma\rangle+\langle \tau,  \tbf{g}^{(k+1)}\tau\rangle\Big)^2 \bigg)\\
		& =  \sum_{\sigma,\tau \in S_{\eps,k} } \pf(\sigma)\pf(\tau) \exp \bigg[ \frac{\beta^2N}{4}  \bigg(  1- \sum_{s=1}^k \langle \sigma,\phi^{(s)} \rangle^2\bigg)^2+\frac{\beta^2N}{4}  \bigg(  1- \sum_{s=1}^k \langle \tau,\phi^{(s)} \rangle^2\bigg)^2 \bigg]\\
		&\hspace{3.8cm}\times \exp \bigg[ \frac{\beta^2N}{2}  \bigg(  \langle \sigma, \tau\rangle- \sum_{s=1}^k \langle \sigma,\phi^{(s)}\rangle\langle \phi^{(s)} \tau \rangle\bigg)^2 \bigg]\\
		&=  \sum_{\sigma,\tau \in S_{\eps,k} } \pf(\sigma)\pf(\tau) \exp \bigg[ \frac{\beta^2N}{2}    \langle \sigma, \tbf{Q}^{(k)} \tau \rangle ^2 \bigg] \\
		&\hspace{1.5cm}\times \exp\bigg[ \frac{\beta^2N}2  \bigg(  1- \sum_{s=1}^k \langle \textbf{m}^{(k+1)},\phi^{(s)} \rangle^2\bigg)^2 +N\mathcal{O}(\beta^2\eps) \bigg].	
		\end{split}\]
Arguing as in the previous lemma, we therefore see that it is enough to show that 
		\[ \bE\bigg|\frac1N\log \sum_{\sigma,\tau \in S_{\eps,k} } \pf(\sigma)\pf(\tau) \exp \bigg[ \frac{\beta^2N}{2}    \langle \sigma, \tbf{Q}^{(k)} \tau \rangle ^2 \bigg]\bigg|\]
vanishes when $N\to \infty$. To this end, recall that by definition of $S_{\eps,k}$, we have that
		\begin{equation}\label{eq:aux4} \sup_{\tau\in S_{\eps,k}} \big| \|\tbf{Q}^{(k)}\tau\|^2  - \big( 1-\langle \tbf{m}^{(k+1)}, \tbf{P}^{(k)}\tbf{m}^{(k+1)}\rangle\big)\big|\leq  C\eps \end{equation}
for some $C>0$ independent of $N$ and $k$. For fixed $\tau\in S_{\eps,k}$, we then have 
		\[\begin{split}
		 \pf(S_{\eps,k})&\leq \sum_{\sigma \in S_{\eps,k} } \pf(\sigma) \exp \bigg[ \frac{\beta^2N}{2}    \langle \sigma, \tbf{Q}^{(k)} \tau \rangle ^2 \bigg] \\
		& \leq  \pf(S_{\eps,k})+ \int_0^1 dt\, N \beta^2 t  \, e^{  \frac N2 \beta^2  t^2} \, \pf \big( \big\{ \sigma  \in  \{-1,1\}^N: | \langle \sigma, \tbf{Q}^{(k)} \tau \rangle | >t \big\} \big).
		\end{split}\] 
Setting $\lambda = t/(1-q)  $  and using that $\log \cosh(x+y)\leq \log\cosh(x)+y\tanh(x) + y^2/2$ for $x,y\in\mathbb{R}$, we can estimate the tail probability in the integral by  
		\[\begin{split}
		&\pf \big( \big\{ \sigma  \in  \{-1,1\}^N: | \langle \sigma, \tbf{Q}^{(k)} \tau \rangle | >t \big\} \big) \\
		&\leq \exp\bigg[ -N \lambda t  + N  \lambda \langle \tbf{m}^{(k+1)},  \tbf{Q}^{(k)} \tau\rangle +  \frac {N \lambda^2}2 \| \tbf{Q}^{(k)} \tau \|^2 \bigg]  \\
		&\hspace{0.5cm} + \exp\bigg[ -N \lambda t  - N \lambda \langle \tbf{m}^{(k+1)},  \tbf{Q}^{(k)} \tau\rangle +  \frac {N \lambda^2}2 \| \tbf{Q}^{(k)} \tau \|^2 \bigg] \\
		& \leq 2 \exp \bigg[ -\frac{ N t^2}{1-q}   +   \frac{ N t^2}2 \frac{\big\| \tbf{Q}^{(k)} \tau\big\|^2}{(1-q)^2} \bigg] \exp\bigg[\frac{Nt}{1-q} \|\tbf{Q}^{(k)} \tbf{m}^{(k+1)} \|\bigg],
		\end{split}\]		
so that 
		\[\begin{split}
		 \pf(S_{\eps,k})&\leq \sum_{\sigma \in S_{\eps,k} } \pf(\sigma) \exp \bigg[ \frac{\beta^2N}{2}    \langle \sigma, \tbf{Q}^{(k)} \tau \rangle ^2 \bigg] \\
		& \leq \pf(S_{\eps,k})
		+2 \int_0^1dt\,  N \beta^2 t  \,  e^{ \frac{Nt^2}{2(1-q)}\big(\beta^2(1-q)-2 +   \frac{ \| \tbf{Q}^{(k)}\tau\|^2}{(1-q)}\big)}e^{\frac{N t}{1-q}\|\tbf{Q}^{(k)} \tbf{m}^{(k+1)} \|}.
		\end{split}\] 		
In particular, by \eqref{eq:aux4} and because
		\[  \lim_{N\to\infty}  \langle \tbf{m}^{(k+1)}, \tbf{P}^{(k)}\tbf{m}^{(k+1)}\rangle =   \Gamma_k^2, \hspace{0.5cm} \lim_{N\to\infty} \|  \tbf{Q}^{(k)} \tbf{m}^{(k+1)}\|^2 = q-\Gamma_k^2 , \]
in $L^p(d\mathbb{P})$ for $p\in [1;\infty)$, we obtain under the condition \eqref{eq:techline}, i.e. $\beta^2(1-q)\leq1$, that
		\[\lim_{\eps\to 0}\lim_{k\to\infty} \limsup_{N\to\infty} \bE\bigg|\frac1N\log \sum_{\sigma,\tau \in S_{\eps,k} } \pf(\sigma)\pf(\tau) \exp \bigg[ \frac{\beta^2N}{2}    \langle \sigma, \tbf{Q}^{(k)} \tau \rangle ^2 \bigg]\bigg| =0, \]
which implies \eqref{eq:2ndmoment}.
\end{proof}

\noindent \emph{Remark:} The key difficulty in the proof of Lemma \ref{lm:2ndmoment} is to obtain a strong concentration bound on the overlap $(\sigma, \tau)\mapsto  \langle \sigma, \tau\rangle$ under the product measure $\pf^{\otimes 2}$, restricted to $S_{\eps,k}\times S_{\eps,k} \subset \{-1, 1\}^N\times \{-1,1\}^N$. In our proof, the condition $\beta^2(1-q)\leq 1$ emerges due to the restrictions on one of the spin variables, say, $\tau \in S_{\eps,k}$. In principle, one may be able to find the optimal temperature condition by taking into account the restrictions on the other spin variable as well using standard large deviation variational estimates, but an exact solution seems difficult. We hope to get back to this point in future work.


\section{Proof of Theorem \ref{thm:main}}\label{sec:proofmain}

In this section we prove Theorem \ref{thm:main}, based on Lemmas \ref{lm:1stmoment} and \ref{lm:2ndmoment}. Before we start, let us first re-center the Hamiltonian $H_N$ appropriately, as outlined in the introduction. 

Using the notation of Section \ref{sec:mi}, we have that
		\[\begin{split}
		 \frac{H_N(\sigma)}N &= \frac\beta2 \langle \sigma, \bar{\tbf{g}} \sigma\rangle + \langle h, \sigma\rangle  = \frac\beta2 \langle \sigma, \bar{\tbf{g}}^{(k+1)} \sigma\rangle + \frac\beta2 \sum_{s=1}^k \langle \sigma, \bar\rho^{(s)}\sigma\rangle +\langle h,\sigma\rangle.
		\end{split}\]
In contrast to \cite{Bolt2}, instead of centering the spins $\sigma$ around $\tbf{m}^{(k+1)}$, we center the spins in $\langle \sigma, \bar\rho^{(s)}\sigma\rangle$ around $\gamma_s \phi^{(s)}$ in order to produce the right cavity field $\tbf{h}^{(k+1)}$. Notice that the remaining term $\langle \sigma, \bar{\tbf{g}}^{(k+1)} \sigma\rangle$ contains automatically centered spins around $\sum_{s=1}^k\gamma_s \phi^{(s)}$ (which approximately equals $\tbf{m}^{(k+1)}$), as $\bar{\tbf{g}}^{(k+1)} \phi^{(s)} = 0$ for $s<k+1$. We thus write
		\[\begin{split}
		\langle \sigma, \bar\rho^{(s)}\sigma\rangle &=2\gamma_s\langle \sigma , \bar\rho^{(s)}\phi^{(s)}\rangle+ \langle \sigma-\gamma_s \phi^{(s)}, \bar\rho^{(s)}(\sigma-\gamma_s \phi^{(s)})\rangle-\gamma_s^2\langle   \phi^{(s)}, \bar \rho^{(s) } \phi^{(s)} \rangle\\
		& = 2\gamma_s\langle   \sigma ,  \zeta^{(s)}\rangle+\langle \wh\sigma^{(s)} , \bar\rho^{(s)} \wh \sigma^{(s)}\rangle-\gamma_s^2\langle   \phi^{(s)}, \zeta^{(s)}\rangle,
		\end{split}\]
which follows from $ \bar \rho^{(s) } \phi^{(s)}  = \zeta^{(s)} $ and where we set $ \wh \sigma^{(s)}=\sigma-\gamma_s \phi^{(s)} $. Hence 
		\[\begin{split}
		 \frac{H_N(\sigma)}N & = \frac\beta2 \langle \sigma, \bar{\tbf{g}}^{(k+1)} \sigma\rangle  +\langle \tbf{h}^{(k+1)},\sigma\rangle\\
		 &\hspace{0.5cm} + \frac\beta2\sum_{s=1}^k \langle \wh\sigma^{(s)} , \bar\rho^{(s)} \wh \sigma^{(s)}\rangle- \frac\beta2\sum_{s=1}^k\gamma_s^2\langle   \phi^{(s)}, \zeta^{(s)}\rangle +\beta\Big( \gamma_k - \sqrt{q-\Gamma_{k-1}^2}\Big)\langle \sigma, \zeta^{(k)}\rangle.
		 \end{split}\]
Since an exact evaluation of the free energy seems rather involved, let us notice here that for configurations $\sigma \in S_{\eps,k}$ as defined in \eqref{eq:Sepsk} instead, we have approximately $  H_N(\sigma)/N  \approx \frac\beta2 \langle \sigma, \bar{\tbf{g}}^{(k+1)} \sigma\rangle  +\langle \tbf{h}^{(k+1)},\sigma\rangle$. Indeed, we find that
		\[\begin{split}
		\langle \wh\sigma^{(s)} , \bar\rho^{(s)} \wh \sigma^{(s)}\rangle = 2 \langle \sigma - \gamma_s\phi^{(s)}, \zeta^{(s)}\rangle ( \langle \phi^{(s)},\sigma\rangle -\gamma_s) - \langle \phi^{(s)},\zeta^{(s)}\rangle ( \langle \phi^{(s)},\sigma\rangle -\gamma_s)^2 
		\end{split}\]
with $ \gamma_s \simeq \langle \tbf{m}^{(k+1)}, \phi^{(s)}\rangle$. Similarly, recall that $ \langle   \phi^{(s)}, \zeta^{(s)}\rangle \sim \mathcal{N}(0, 2/N)$ for each $s$ and 
		$$\Big|\Big(\gamma_k-\sqrt{q-\Gamma_k^2}\Big) \sup_{\sigma\in \{-1,1\}^N} \langle \sigma, \zeta^{(k)}\rangle\Big| \leq \Big(\gamma_k+ \sqrt{q-\Gamma_k^2}\Big)\|\zeta^{(k)} \|$$
with $\lim_{k\to\infty} |q-\Gamma_k^2|=\lim_{k\to\infty}\gamma_k=0$ under \eqref{eq:ATline}, by Lemmas \ref{lm:seqlemma} and \ref{lm:aux}.

Thus, we obtain the simple lower bound
		\[\begin{split}
		\frac1N \log Z_N &=   \frac1N\log \sum_{\sigma\in\{-1,1\}^N} e^{H_N(\sigma)} \\
		& = \log 2 +\frac1N \sum_{i=1}^N\log\cosh(h^{(k+1)}_i)+ \frac1N\log \sum_{\sigma\in\{-1,1\}^N} \pf(\sigma) e^{H_N(\sigma)-N\langle \tbf{h}^{(k+1)},\sigma\rangle } \\
		 &\geq \log 2+\frac1N \sum_{i=1}^N\log\cosh(h^{(k+1)}_i)+\frac1N\log \sum_{\sigma\in S_{\eps,k}} \pf(\sigma) e^{H_N(\sigma)-N\langle \tbf{h}^{(k+1)},\sigma\rangle } \\
		\end{split}\]
so that 
		\begin{equation}\label{eq:ZNlowerbnd}
		\begin{split}
		\frac1N \log Z_N &\geq \log 2+\frac1N \sum_{i=1}^N\log\cosh(h^{(k+1)}_i)+\frac1N \log Z_N^{(k+1)}(S_{\eps,k}) \\
		&\hspace{0.4cm}- C \frac{\beta\eps}k \sum_{s=1}^k  \| \zeta^{(s)}\| - C\beta \sum_{s=1}^k  \| \zeta^{(s)}\|\big| \langle \phi^{(s)},\tbf{m}^{(k+1)}\rangle -\gamma_s\big|  \\
		&\hspace{0.4cm} - \frac\beta{2}\sum_{s=1}^k\gamma_s^2\langle   \phi^{(s)}, \zeta^{(s)}\rangle -\beta\gamma_k \| \zeta^{(k)} \|- \beta\sqrt{q-\Gamma_k^2} \| \zeta^{(k)} \|. 
		\end{split}
		\end{equation} 
We have now all necessary preparations for the proof of Theorem \ref{thm:main}.
\begin{proof}[Proof of Theorem \ref{thm:main}]
Up to minor modifications, we follow \cite[Section 3]{Bolt2} and we also abbreviate $RS(\beta,h)= \log 2 + E \log\cosh(\beta\sqrt q Z +h)+ \beta^2(1-q)^2/4.$ 

By the Paley-Zygmund inequality, we have that
		\[ \bP_k \Big(  Z_N^{(k+1)}(S_{\eps,k})\geq  \bE_kZ_N^{(k+1)}(S_{\eps,k})/2  \Big) \geq \frac{\big(\bE_k ( Z_N^{(k+1)}(S_{\eps,k})\big)^2}{4 \bE_k \big[ (Z_N^{(k+1)}(S_{\eps,k}))^2\big] }\]
Given $\delta_1 > 0$, Lemmas \ref{lm:1stmoment} and \ref{lm:2ndmoment} imply
		\[ \bP\bigg(  \frac2N \log  \bE_k \, Z_N^{(k+1)}(S_{\eps,k})  \geq  \frac1N \log 4\,\bE_k \big( Z_N^2(S_{\eps,k})\big)  - \delta_1 \bigg) \geq \frac12 \]
if we choose $\eps>0$ sufficiently small, $k$ sufficiently large and $N \geq N_1(\eps, k)\in \mathbb{N}$ sufficiently large. This also implies that
		\[\begin{split} 
		\bP\bigg( \bP_k \bigg( \frac1N\log Z_N^{(k+1)}(S_{\eps,k})\geq  \frac1N\log \bE_kZ_N^{(k+1)}(S_{\eps,k}) -\frac{\log 2}N \bigg) \geq e^{-\delta_1 N}  \bigg) \geq \frac12.
		\end{split}\]
On the other hand, applying Lemma \ref{lm:aux} \emph{3)}, we note that 
		\[ \lim_{N\to\infty} \bP \bigg( \frac1k \sum_{s=1}^k  \| \zeta^{(s)}\| > 2 \bigg) \leq \lim_{N\to\infty} \sum_{s=1}^k\bP \Big(   \| \zeta^{(s)}\| > 2 \Big)=0  \]
and that
		\[\lim_{N\to\infty} \bP \Big(\| \zeta^{(k+1)}\| >2 \Big) =0.  \]
Moreover, Lemma \ref{lm:aux} and the fact that $ \langle \phi^{(s)},\tbf{m}^{(k+1)}\rangle \simeq \gamma_s$ by Proposition \ref{prop:mi} imply 
		\[\lim_{N\to\infty } \bigg( C\beta \sum_{s=1}^k  \| \zeta^{(s)}\|\big| \langle \phi^{(s)},\tbf{m}^{(k+1)}\rangle -\gamma_s\big|  + \frac\beta{2}\sum_{s=1}^k\gamma_s^2\langle   \phi^{(s)}, \zeta^{(s)}\rangle \bigg) = 0\]
as well as
		\[ \lim_{N\to\infty} \frac1N \sum_{i=1}^N\log\cosh\big(h^{(k+1)}_i\big) = E \log\cosh(\beta\sqrt{q}Z +h) \]
in probability. Now, the lower bound \eqref{eq:ZNlowerbnd} implies that
		\[\begin{split} &\bP\bigg( \bP_k \bigg( \frac1N\log Z_N \geq \log 2 +  \frac1N\sum_{i=1}^N\log \cosh(\beta\sqrt{q}Z+h) \\
		&\hspace{3cm} + \frac1N\log \bE_kZ_N^{(k+1)}(S_{\eps,k}) -\frac{\log 2}N -\cE_{\eps,k} \bigg) \geq e^{-\delta_1 N}  \bigg) \geq \frac12, 
		\end{split}\] 
where we defined the error $\cE_{\eps,k}$ by
		\[\begin{split}
		\cE_{\eps,k}  &=    \frac{C\beta\eps}k \sum_{s=1}^k  \| \zeta^{(s)}\| + C\beta \sum_{s=1}^k  \| \zeta^{(s)}\|\big| \langle \phi^{(s)},\tbf{m}^{(k+1)}\rangle -\gamma_s\big|  \\
		&\hspace{0.4cm} + \frac\beta{2}\sum_{s=1}^k\gamma_s^2\langle   \phi^{(s)}, \zeta^{(s)}\rangle +\beta\gamma_k \| \zeta^{(k)} \| +  \beta\sqrt{q-\Gamma_k^2} \| \zeta^{(k)} \|.
		\end{split}\]	
Given $\delta_2>0$, we may choose $\eps>0$ sufficiently small, $k\in\mathbb{N}$ sufficiently large and $N\geq N_2(k,\eps) \in \mathbb{N} $ sufficiently large such that
		\[ \bP\bigg(| \cE_{\eps,k} | \leq  \frac{\delta_2}4, \,  \bigg| \frac1N \sum_{i=1}^N\log\cosh(h^{(k+1)}_i) - E \log\cosh(\beta\sqrt{q}Z +h)\bigg|\leq \frac{\delta_2}4\bigg) \geq \frac78 \]
and, by Lemma \ref{lm:1stmoment}, also such that
		\[ \bP \bigg( \frac1N \log \bE_k Z_N^{(k+1)}(S_{\eps,k}) \geq \frac{\beta^2}4(1-q)^2 - \frac{\delta_2}4\bigg) \geq \frac78. \]	
Combining the above observations, we find that
		\[ \bP\bigg( \bP_k \bigg( \frac1N\log Z_N \geq  RS(\beta,h)    - \delta_2 \bigg) \geq e^{-\delta_1 N}  \bigg) \geq \frac14 
		\] 	
for all $N\geq \max (N_1(\eps,k), N_2(\eps,k), 4\log2/\delta_2)$. This implies 
		\[ \bP\bigg(  \frac1N\log Z_N \geq  RS(\beta,h)    - \delta_2 \bigg)  \geq \frac14e^{-\delta_1 N}.
		\] 
By Gaussian concentration of the free energy, i.e.
		\[ \bP\bigg( \bigg| \frac1N\log Z_N - \frac1N \bE \log Z_N\bigg| \leq \delta_3 \bigg) \geq 1 - e^{ -N \delta_3^2 /\beta^2},\]
we may choose $ \delta_1< \delta_3^2/(2\beta^2) $ to conclude for large enough $N$ that 
		\[ \bE\frac1N  \log Z_N \geq RS(\beta,h) - \delta_2 -\delta_3. \]
Since $\delta_2, \delta_3>0$ were arbitrary, this shows that the right hand side in \eqref{eq:RS} is a lower bound to $\lim_{N\to\infty} \mathbb{E}\frac1N \log Z_N$ and by \emph{Remark} 2), this proves Theorem \ref{thm:main}.
\end{proof}

\noindent\textbf{Acknowledgements.} The work of H.-T. Y. is partially supported by NSF grant DMS-1855509 and a Simons Investigator award. The work of C.B. was partly funded by the Deutsche Forschungsgemeinschaft (DFG, German Research Foundation) under Germany's Excellence Strategy – GZ 2047/1, Projekt-ID 390685813.

\vspace{0.8cm}
\noindent {\Large \textbf{Data Availability}}
\vspace{0.5cm}

\noindent Data sharing is not applicable to this article as no new data were created or analyzed in this study.

\vspace{0.5cm}
\noindent {\Large \textbf{Conflict of Interests}}
\vspace{0.5cm}

\noindent The authors have no conflict of interest to declare that are relevant to the content of this article.





\begin{thebibliography}{55}

\bibitem{ABSY}
A. Adhikari, C. Brennecke, P. von Soosten, H.-T. Yau. Dynamical Approach to the TAP Equations for the Sherrington-Kirkpatrick Model. \emph{J. Stat. Phys.} \textbf{183}, 35 (2021). https://doi.org/10.1007/s10955-021-02773-7

\bibitem{ALR}
M. Aizenman, J. L. Lebowitz, D. Ruelle. Some rigorous results on the Sherrington-Kirkpatrick spin glass model. \emph{Comm. Math. Phys.} \textbf{11} (1987), pp. 3-20.

\bibitem{AT}
J. R. L. de Almeida, D. J. Thouless. Stability of the Sherrington--Kirkpatrick solution of a spin glass model. \emph{J. Phys. A: Math. Gen.} \textbf{11} (1978), pp. 983-990.

\bibitem{BelKis}
D. Belius, N. Kistler. The TAP-Plefka Variational Principle for the Spherical SK Model. \emph{Comm. Math. Phys.} \textbf{367} (2019), pp. 991-1017.

\bibitem{Bolt1}
E. Bolthausen. An Iterative Construction of Solutions of the TAP Equations for the Sherrington–Kirkpatrick Model. \emph{Comm. Math. Phys.} \textbf{325} (2014), pp. 333-366.

\bibitem{Bolt2}
E. Bolthausen. A Morita Type Proof of the replica-symmetric Formula for SK. In: \emph{Statistical Mechanics of Classical and Disordered Systems} (2018), Springer Proceedings in Mathematics \& Statistics, pp. 63-93. arXiv:1809.07972.

\bibitem{BoBo}
E. Bolthausen, A. Bovier. Spin Glasses. Lecture Notes in Mathematics, \textbf{1900} (2007), \emph{Springer Verlag Berlin--Heidelberg}.

\bibitem{Cha}
S. Chatterjee. Spin glasses and Stein’s method. \emph{Probab. Theory Relat. Fields} \textbf{148} (2010), pp. 567-600.

\bibitem{Ch}
W.-K. Chen. On the Almeida-Thouless transition line in the SK model with centered Gaussian external field. Preprint: arXiv:2103.04802.


\bibitem{Gue}
F. Guerra. Broken Replica Symmetry Bounds in the Mean Field Spin Glass Model. \emph{Comm. Math. Phys.} \textbf{233} (2003), pp. 1-12.

\bibitem{GueTo}
F. Guerra, F.L. Toninelli. The Thermodynamic Limit in Mean Field Spin Glass Models. \emph{Commun. Math. Phys.} \tbf{230} (2002), pp. 71-79.

\bibitem{AuJag}
A. Auffinger, A. Jagannath. Thouless-Anderson-Palmer equations for generic p-spin glasses. \emph{Ann. Probab.} \textbf{47} (2019), no. 4, pp. 2230--2256.

\bibitem{JagTob}
A. Jagannath, I. Tobasco. Some properties of the phase diagram for mixed p-spin glasses. \emph{Probab. Theory Relat. Fields} \textbf{167} (2017), pp. 615-672.

\bibitem{MPV}
M. M\'ezard, G. Parisi, M. A. Virasoro. Spin Glass Theory and Beyond. World Scientific Lecture Notes in Physics Vol.9 (1987), \emph{World Scientific, Singapore New Jersey Hong Kong}.

\bibitem{Pan1}
D. Panchenko. The Parisi ultrametricity conjecture. \emph{Ann. of Math.} \textbf{177} (2013), Issue 1, pp. 383-393.

\bibitem{Pan2}
D. Panchenko. The Sherrington-Kirkpatrick Model. Springer Monographs in Mathematics (2013), \emph{Springer Verlag, New York}.

\bibitem{Pan3}
D. Panchenko. The Parisi formula for mixed $p$-spin models. \emph{Ann. of Probab.} \textbf{42} (2014), no. 3, pp. 946-958.

\bibitem{Par1}
G. Parisi. Infinite number of order parameters for spin-glasses. \emph{Phys. Rev. Lett.} \textbf{43} (1979), pp. 1754-1756.

\bibitem{Par2}
G. Parisi. A sequence of approximate solutions to the S-K model for spin glasses. \emph{J. Phys. A} \textbf{13} (1980), pp. L-115.

\bibitem{Sh}

M. Shcherbina. On the Replica-Symmetric Solution for the Sherrington-Kirkpatrick Model. \emph{Helv. Phys. Acta} \textbf{70} (1997), pp. 838-853.

\bibitem{SK}
D. Sherrington, S. Kirkpatrick. Solvable model of a spin glass. \emph{Phys. Rev. Lett.} \textbf{35} (1975), pp. 1792-1796.


\bibitem{Tal}
M. Talagrand. The Parisi formula. \emph{Ann. of Math.} \textbf{163} (2006), no. 1, pp. 221-263.

\bibitem{Tal1}
M. Talagrand. Mean Field Models for Spin Glasses. Volume I: Basic Examples. A Series of Modern Surveys in Mathematics, \textbf{Vol. 54} (2011), \emph{Springer Verlag Berlin--Heidelberg}.

\bibitem{Tal2}
M. Talagrand. Mean Field Models for Spin Glasses. Volume II: Advanced Replica-Symmetry and Low Temperature. A Series of Modern Surveys in Mathematics, \textbf{Vol. 54} (2011), \emph{Springer Verlag Berlin--Heidelberg}.

\bibitem{TAP}
D. J. Thouless, P. W. Anderson, R. G. Palmer. Solution of 'solvable model in spin glasses'. \emph{Philos. Magazin} \textbf{35} (1977), pp. 593-601.

\end{thebibliography}
\end{document}